\newtheorem{theorem}{Theorem}
\newtheorem{lemma}[theorem]{Lemma}
\newtheoremstyle{restate}{}{}{\itshape}{}{\bfseries}{~(restated).}{.5em}{\thmnote{#3}}
\theoremstyle{restate}
\newtheorem*{restate}{}
\theoremstyle{definition}
\newtheorem{definition}[theorem]{Definition}
\theoremstyle{remark}
\newcommand{\E}{\mathop{\mathbb{E}}}
\newcommand{\poly}{\mathrm{poly}}
\newcommand{\DKLver}[2]{D_{\mathrm{KL}}\left(\ooalign{$\genfrac{}{}{1.6pt}0{#1}{#2}$\cr$\color{white}\genfrac{}{}{.8pt}0{\phantom{#1}}{\phantom{#2}}$}\right)}
\newcommand{\dist}{\mathrm{dist}}
\newcommand{\cA}{\mathcal{A}}
\newcommand{\cB}{\mathcal{B}}
\newcommand{\cD}{\mathcal{D}}
\newcommand{\cS}{\mathcal{S}}
\newcommand{\cP}{\mathcal{P}}
\newcommand{\tV}{\tilde{V}}
\newcommand{\tv}{\tilde{v}}
\newcommand{\good}{\textrm{good}}
\newcommand{\urdec}{\mathsf{UR_{dec}^{\subset}}}
\newcommand{\ur}{\mathsf{UR^{\subset}}}
\newcommand{\conn}{\textrm{conn}}
\newcommand{\blk}{\textrm{blk}}
\newcommand{\sk}{\textrm{sk}}
\newcommand{\con}{\textsf{connectivity}}
\newcommand{\bcast}{\mathsf{BCAST}}
\newcommand{\ucast}{\mathsf{UCAST}}
\newsavebox{\mybox}
\NewDocumentEnvironment{code}{mO{Algorithm}O{()}O{#1}}
	{\begin{center}
	\begin{lrbox}{\mybox}\footnotesize%
	\begin{minipage}{5.5in}
	\ifcsname c@pctr#1\endcsname\else\newcounter{pctr#1}\textbf{#2 $\mathtt{#4}#3$}:\fi
	\setlength{\topsep}{0pt}
	\begin{compactenum}
	\setcounter{enumi}{\value{pctr#1}}
	}{\setcounter{pctr#1}{\value{enumi}}
	\end{compactenum}
	\end{minipage}
	\end{lrbox}\fbox{\usebox{\mybox}}
	\end{center}}
\title{Tight Distributed Sketching Lower Bound for Connectivity}
\author{Huacheng Yu\thanks{Department of Computer Science, Princeton University. \texttt{yuhch123@gmail.com}}}
\date{}
\begin{document}

\setcounter{page}{0}
\maketitle
\thispagestyle{empty}
\begin{abstract}
	In this paper, we study the distributed sketching complexity of \con{}.
	In distributed graph sketching, an $n$-node graph $G$ is distributed to $n$ players such that each player sees the neighborhood of one vertex.
	The players then simultaneously send one message to the \emph{referee}, who must compute some function of $G$ with high probability.
	For \con{}, the referee must output whether $G$ is connected.
	The goal is to minimize the message lengths.
	Such sketching schemes are equivalent to one-round protocols in the broadcast congested clique model.

	We prove that the expected average message length must be at least $\Omega(\log^3 n)$ bits, if the error probability is at most $1/4$.
	It matches the upper bound obtained by the AGM sketch~\cite{AGM12}, which even allows the referee to output a spanning forest of $G$ with probability $1-1/\poly\, n$.
	Our lower bound strengthens the previous $\Omega(\log^3 n)$ lower bound for \emph{spanning forest} computation~\cite{NY19}.
	Hence, it implies that \con{}, a decision problem, is as hard as its ``search'' version in this model.
\end{abstract}
\newpage

\section{Introduction}
In distributed graph sketching, an $n$-node graph $G$, where the nodes are labeled with integers from $1$ to $n$, is distributed to $n$ players such that the $i$-th player can see the labels of neighbors of node $i$.
Then each player, based on this information, sends a short message (called the \emph{sketch}) to a special player, called the \emph{referee}.
Finally, the referee, who does not have direct access to the graph, must compute some function of $G$. 
We usually assume that all players (including the referee) have access to shared random bits.
The goal is to minimize the message lengths.
In this paper, we study \con{} in this model, i.e., the referee must decide whether $G$ is connected.

For problems where the referee must compute a function that depends on the \emph{global} structure of $G$, it may seem that the players have no way to figure out what information is more important from only the \emph{local} structures.
For instance for \con{}, each player only sees the set of edges incident to a node, and they cannot distinguish which edges are more crucial in connecting $G$ (e.g., bridges).
Hence, it may seem that they must tell the referee a large amount of information so that the ``important'' information is included in the message with high probability.
Surprisingly, this intuition is wrong.
Ahn, Guha and McGregor~\cite{AGM12} showed that it is possible to ``sketch'' each neighborhood using only $O(\log^3 n)$ bits, such that the referee is still able to decide if $G$ is connected with high probability.

Roughly speaking, in their algorithm, each player computes ``hashes'' of its neighborhood such that the hash values allow the referee to recover one neighbor of this vertex.
Moreover, these hashes are ``mergeable'', in the sense that by combining the hashes of a set of vertices, the referee is able to recover one edge \emph{from this set to the rest of $G$}.
Therefore, by repeatedly finding outgoing edges from each connected component, and merging the connected components and their hash values, the referee will be able to decide if $G$ is connected.
We present a more detailed summary in Section~\ref{sec_agm}.

In a previous work of Nelson and Yu~\cite{NY19}, it was shown that if the referee has to output the entire spanning forest with constant probability, then the sketch size has to be $\Omega(\log^3 n)$ bits.
Computing a spanning forest could, in principle, be a much harder task, as the output has $\Theta(n\log n)$ bits, implying a trivial lower bound of $\Omega(\log n)$ bits.
On the other hand, \con{} only requires the referee to learn \emph{one bit} about $G$.
In this paper, we strengthen the previous lower bound, and show that the players still have to send $\Omega(\log^3 n)$ bits on average in order for the referee to learn this one-bit.

\newcommand{\thmmaincont}{For any (randomized) distributed sketching scheme that allows the referee to decide if $G$ is connected with probability at least $3/4$, the average sketch size of all players must be at least $\Omega(\log^3 n)$ bits in expectation.}
\begin{theorem}\label{thm_main}
	\thmmaincont
\end{theorem}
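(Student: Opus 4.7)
The plan is to extend the framework of Nelson and Yu~\cite{NY19} from spanning-forest computation to the connectivity decision problem, matching the $O(\log^3 n)$ upper bound of AGM. The strategy is to design a hierarchical hard distribution $\cD$ on $n$-vertex graphs that mirrors the $L=\Theta(\log n)$ merging levels of the AGM sketch, and to argue via an information-theoretic layer-by-layer (``peeling'') argument that even the 1-bit connectivity output forces each player to carry $\Omega(\log^2 n)$ bits per level on average.

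I would construct $\cD$ recursively. At level $0$, the $n$ vertices form clusters of constant size, each internally connected by a fixed spanning tree. At each level $\ell = 1, \dots, L$, the current clusters are grouped into pairs and each pair is joined by one ``bridge edge'' drawn uniformly from a candidate set of size $\poly(n)$ between them. Finally, a fair coin is flipped: on heads, all bridges across all levels are present, so $G$ is connected; on tails, one uniformly random bridge (over all levels and all cluster pairs) is omitted, so $G$ has exactly two components. The distribution is symmetric under relabeling of players, so any total-communication lower bound transfers to an average per-player bound.

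The technical core is a peeling argument across levels. I would show by downward induction on $\ell$ that any protocol correctly deciding connectivity on $\cD$ with constant advantage can be inverted (in a standard information-theoretic sense) to produce, with constant probability, the bridge edges at level $\ell$, conditioned on levels $\ell+1,\dots,L$ having already been recovered. At each such level, producing one bridge edge per cluster pair is an instance of the universal relation primitive $\ur$ played by the players inside that cluster pair; the known $\Omega(\log^2 n)$ per-player lower bound for $\ur$ in the SMP model then yields an $\Omega(\log^2 n)$ per-player cost per level. Summing over $L=\Theta(\log n)$ levels and applying the symmetry of $\cD$ over players gives $\Omega(\log^3 n)$ on average.

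The main obstacle will be the base case of the peeling: showing that a single-bit decision output reveals the top-level bridge edges at all. The $\ur$ hardness lies in its \emph{search} version, whereas the na{\"\i}ve ``decision'' analogue (equality testing) is easy, so a clever reduction is essential. I plan to exploit the symmetry of $\cD$ over the $\poly(n)$ candidate positions for the omitted bridge: because the omitted bridge is uniform over all positions, any protocol with nontrivial advantage must be nearly equally informative about every candidate, and a standard symmetrization/averaging argument (in the spirit of~\cite{NY19}) converts the 1-bit decision task at the top level into a $\ur$-search task. Making this symmetrization preserve a per-player (rather than only total) bound that matches AGM's tight $O(\log^3 n)$ upper bound is where the argument will require the most care.
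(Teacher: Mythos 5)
The central step you defer to the end---converting the one-bit decision output into a $\ur$-search answer by ``symmetrization/averaging''---is exactly the step that cannot be done by any standard argument, and it is where the paper's actual work lies. A decision protocol can have full advantage while pinning down no candidate at all: for example, Alice can send a set $W$ of $2k-1$ elements of which $k$ lie in $S$, and the recipient decides which side contains $S\setminus T$ by majority vote inside $W$; no single element of $S$ is ever certified, so no averaging over the candidate positions extracts a search answer. For this reason the paper does not reduce decision to search. Instead it defines the decision problem $\urdec$ (Bob gets $T$ and a bipartition $(P_1,P_2)$ of $[U]\setminus T$ and must say which side contains $S\setminus T$) and proves an $\Omega(\log(1/\delta)\log^2 U)$ lower bound for it directly, via a new structural lemma: if Bob succeeds with probability $3/4$ on the sets consistent with Alice's message, then a $2^{-O(k)}$ fraction of those sets share a common $k$-element intersection outside $T$ (proved by a second-moment/anticoncentration argument on the counts $A_1,A_2$ of consistent sets on each side of a random bipartition). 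This lemma drives a modified reconstruction process that charges $k$ elements at cost far below $k\log U$, with a careful ``mixing'' step to keep the conditional error small across rounds. Your proposal has no counterpart to this lemma, and the base case of your peeling argument would fail precisely on counterexamples of the above form.

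There is also a problem with the hard distribution itself. In your hierarchical construction each bridge is a single cross-cluster edge, and its two endpoints can recognize it locally (they see the label of a neighbor lying outside their cluster). With one bit per level per player ($O(\log n)$ bits total), every bridge endpoint can announce ``I carry a level-$\ell$ bridge,'' and since the cluster pairing is known, the referee learns exactly which cluster pair (if any) is missing its bridge; so \con{} on your distribution is solvable with $O(\log n)$-bit sketches and cannot give $\Omega(\log^3 n)$. The paper's instance avoids this by embedding a $\urdec$ gadget at each center vertex $v^m_j$: the center cannot distinguish its pendant neighbors ($T_j$) from its connecting neighbors ($S_j\setminus T_j$), and the connecting neighbors live in a set $V^r$ much smaller than $V^m$, so the information the other endpoints can reveal is negligible per embedded instance. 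Moreover, the third $\log n$ factor in the paper does not come from summing over levels at constant error; it comes from driving the per-block error down to $\delta=n^{-\Theta(1)}$, which is forced by the XOR structure: $\sqrt n$ independent blocks with the $s_1,s_2,t_1,t_2$ swap gadget are chained so that $G$ is connected iff the XOR of the block bits is $1$, and a bias/product argument shows a referee with overall error $1/3$ must decide a typical block with error $O(1/\sqrt n)$. Your level-peeling accounting would additionally need a direct-sum argument across levels under conditioning on the transcript, which is not supplied and is not how either \cite{NY19} or this paper obtains the bound.
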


\subsection{Related work}

\paragraph{Distributed computing.}
Distributed graph sketching is related to the \emph{broadcast congested clique} model ($\bcast(b)$) in distributed computing, where \con{} has attracted significant attention lately~\cite{BKMNRST15,MT16,JN17,JN18,PP19}.
In $\bcast(b)$, an input $G$ is distributed to $n$ players, such that the $i$-th player sees the neighborhood of vertex $i$.
An algorithm in this model proceeds in rounds.
In each round, each player simultaneously \emph{broadcasts} one message of length $b$ to all other players, and performs (free) local computation.
After broadcasting (and receiving from every other player) $r$ messages, the players must figure out the output.
The goal is to minimize the number of rounds $r$.
Therefore, distributed graph sketching asks what is the smallest $b$ such that the problem admits a one-round protocol in $\bcast(b)$.

Montealegre and Todinca~\cite{MT16} showed that one can solve \con{} \emph{deterministically} in $O(r)$ rounds in $\bcast(n^{1/r}\log n)$.
Jurdzi\'{n}ski and Nowicki~\cite{JN17} improved that round complexity to $O(\log n/\log\log n)$ for $\bcast(\log n)$.
Their algorithm is also deterministic.
For randomized algorithms, the AGM sketch~\cite{AGM12} solves the problem with only one round in $\bcast(\log^3 n)$.
Pai and Pemmaraju~\cite{PP19} proved an $\Omega(\frac{1}{b}\log n)$ round lower bound in $\bcast(b)$ for deterministic \con{} algorithms.
They also showed the same lower bound for random algorithms that compute the \emph{connected components} of $G$.
To the best of our knowledge, Theorem~\ref{thm_main} is the first non-trivial lower bound for \con{} in $\bcast(b)$ for $b=\omega(\log n)$, even for deterministic algorithms (it implies that if $b=o(\log^3 n)$, then we need at least two rounds).

A related model in distributed computing is the \emph{unicast congested clique} model ($\ucast(b)$), where each player is allowed to send \emph{possibly different} messages to other players in each round.
It turns out that the $\ucast(b)$ model is much more powerful than $\bcast(b)$, and \con{} algorithms with significantly lower round complexity exist~\cite{LPPP05,HPPSS15,GP16,JN18b}.
The best known algorithms use $O(\log\log n)$ rounds deterministically~\cite{LPPP05}, and use $O(1)$ rounds if we allow randomization~\cite{JN18b}.

\paragraph{Dynamic streams.}
The best known distributed sketching scheme~\cite{AGM12} uses \emph{linear sketches}.
If we view the input graph $G$ as a $\binom{n}{2}$-dimensional binary vector $X$, the concatenation of all messages turns out to be a matrix-vector product $AX$, for $A\in\mathbb{Z}^{n\log^2n\times \binom{n}{2}}$ a matrix determined by the shared random bits.
The product vector $AX$ determines if $G$ is connected with high probability.

It also gives an $O(n\log^3 n)$-bit \emph{streaming algorithm} for \con{}.
That is, we wish to maintain a dynamic graph $G$ under edge insertions and deletions using as little memory as possible, such that after all updates, the algorithm is able to decide whether the \emph{final} graph is connected with high probability.
An algorithm can easily maintain the product $AX$ under edge insertions and deletions given $A$.
Moreover, the connectivity of $G$ can be determined from the final $AX$.
Therefore, by maintaining $AX$ (and storing a succinct representation of $A$), this problem can be solved using $O(n\log^3 n)$ bits of memory, where the extra $\log n$ factor is due to the bit complexity of each coordinate of $AX$.

The best known space lower bound for \con{} in this setting is $\Omega(n\log n)$ bits due to Sun and Woodruff~\cite{SW15}.
Their lower bound also holds if we only insert edges to $G$.
It was shown in~\cite{NY19} that if the algorithm has to output a \emph{spanning forest} with constant probability, then it must use at least $\Omega(n\log^3 n)$ bits of space.
Unfortunately, it is not clear whether our new technique can be extended to streaming.

\subsection{AGM sketch}\label{sec_agm}

To better motivate our hard instance and the lower bound argument, we present a summary of the $O(\log^3 n)$ algorithm in this subsection.
The algorithm begins by giving every \emph{possible} undirected edge a unique label, e.g., by concatenating the labels of the two endpoints with the smaller label first.
The basic hash (linear sketch) for each player is simply the XOR of the labels of all incident edges.
Each basic hash takes $O(\log n)$ bits, and it allows one to recover the incident edge if the degree of that vertex happen to be one.

Next, we subsample the edges and compute the basic hashes of each sample.
Specifically, the subsampling creates samples of $O(\log n)$ levels.
In level $i$, we sample each edge with probability $2^{-i}$, and each player computes the basic hash of all \emph{surviving} edges.
Overall, the hashes from all $O(\log n)$ levels have $O(\log^2 n)$ bits.
Now regardless of the degree of the vertex, with constant probability there exists some level where exactly one incident edge survives the sampling.
Hence, the hash at that level recovers this edge.
It turns out that there is a separate structure that has the same size and detects if each level has exactly one surviving edge with error probability $1/\poly\, n$.
Therefore, this $O(\log^2 n)$-bit hash allows one to recover one incident edge with constant probability.

The most important feature of this hash is its mergeability.
That is, if we take two vertices $x$ and $y$, and compute the level-wise XOR of their hashes for each of the $O(\log n)$ levels, then for each level, we obtain simply the XOR of all surviving edges that are incident to either of them, \emph{with the exception that} the edge between $x$ and $y$, if exists, is XORed twice and thus canceled.\footnote{Here, it is important that the players have access to shared randomness, as it allows them to have the same outcome in sampling.}
In general, if we take the level-wise XOR of hashes of a set of vertices, we obtain for each level, the XOR of all surviving \emph{outgoing} edges from this set.
In particular, with constant probability, there exists one level with exactly one outgoing edge, which allows one to recover it.

Finally, each player computes $O(\log n)$ independent such $O(\log^2 n)$-bit hashes, and sends them to the referee.
Therefore, the message lengths are $O(\log^3 n)$ bits.
The referee uses the first $O(\log^2 n)$-bit hashes to compute one outgoing edge from each vertex.
Then the referee merges the hashes of vertices that are already connected, and uses the second hashes to compute one outgoing edge from each connected component, and so on.
It succeeds on each component with constant probability each time.
Therefore, by repeating the above procedure $O(\log n)$ times, the referee recovers the connected components of $G$ with high probability.

\medskip

In summary, the first $\log n$ factor in space is needed to encode the label of a vertex.
The second $\log n$ is used to ``guess'' approximately the number of outgoing edges.
The last $\log n$ factor serves two purposes: The algorithm has $O(\log n)$ rounds, and each round uses fresh randomness; the $O(\log^2 n)$-bit hash only succeeds with constant probability on each connected component, $O(\log n)$ instances are used to ensure that all components succeed.
An $\Omega(\log^3 n)$ lower bound argument and the corresponding hard instance must simultaneously capture the above three factors.

\subsection{Organization}
We present overviews of the previous and new lower bounds in Section~\ref{sec_overview}.
In Section~\ref{sec_urdec_lb}, we prove a lower bound for a communication problem, called $\urdec$.
Finally, we prove the main theorem in Section~\ref{sec_red}, by reducing from $\urdec$.
\section{Overview}\label{sec_overview}
In this section, we summarize the previous lower bound for spanning forest computation~\cite{NY19}, and present an overview of our lower bound proof.
Both lower bound proofs are based on reductions from variants of the communication problem \emph{universal relation}.

\begin{definition}[$\ur$]
In the $\ur$ problem, there are two players Alice and Bob.
Alice receives a set $S\subseteq [U]$ and Bob receives a \emph{proper} subset $T\subset S$ as their inputs.
Then Alice sends one message to Bob, and Bob must find \emph{some} element in $S\setminus T$ with probability at least $1-\delta$.
\end{definition}

The original version (the search version) of universal relation (called $\ur$) is used in the previous spanning forest lower bound.
By applying the above subsampling trick and sending the hashes, this task can be accomplished with $O(\log(1/\delta)\log^2 U)$ bits of communication~\cite{FIS08}.
It turns out that this is optimal as long as $\delta>2^{-U^{0.99}}$~\cite{KNPWWY17}.
The previous spanning forest lower bound is based on a reduction from $\ur$ for $U=n^{\Theta(1)}$ and $\delta=n^{-\Theta(1)}$, in which case, the optimal bound is $\Theta(\log^3 n)$.

In order to prove a lower bound for \con{}, which is a decision problem, we first define and prove a lower bound for a \emph{decision} version of universal relation, called $\urdec$, then we reduce \con{} from it.

\begin{definition}[$\urdec$]
	In the $\urdec$ problem, Alice receives a set $S\subseteq [U]$, Bob receives a \emph{proper} subset $T\subset S$ and a partition $(P_1,P_2)$ of $[U]\setminus T$.
	It is promised that either $S\setminus T\subseteq P_1$, or $S\setminus T\subseteq P_2$.
	Alice sends one message to Bob, and Bob must decide which part contains $S\setminus T$ with probability $1-\delta$.
\end{definition}
Clearly, this is an easier problem than $\ur$, since if Bob could recover any element in $S\setminus T$, then by checking if this element is in $P_1$ or $P_2$, he would be able to decide if $S\setminus T\subseteq P_1$ or $P_2$.
In Section~\ref{sec_urdec_lb}, we prove in fact, the decision version is as hard as the search version.

It turns out that the reductions from $\ur$ and $\urdec$ to spanning forest and \con{} respectively have similar main structures.
On the other hand, the previous $\ur$ lower bound strategy~\cite{KNPWWY17} completely fails on the decision problem $\urdec$.
Hence, the \emph{main technical contribution} of this paper is the $\urdec$ lower bound proof.
In the subsections below, we first overview the reductions from the universal relation problems, and then present a summary of their communication lower bounds.

\subsection{Previous reduction from $\ur$}
Now let us see what is the connection between $\ur$ and distributed sketching for spanning forest.
Fix a vertex $v$.
The player at $v$ sees its neighborhood $S$, and sends a message $M_v$ to the referee.
Suppose the referee figures out that there is a subset $T$ of neighbors of $v$, which have $v$ as their only neighbor.
Then, the only way for $\{v\}\cup T$ to connect to the rest of the graph is through some edge from $v$ to $S\setminus T$.
In the other words, in order to output any spanning forest, the referee must find an element in $S\setminus T$.
Since $v$ does not know $T$ and the referee does not know $S$, intuitively the communication between them must at least ``solve'' $\ur$.

However, this argument does not directly give us a proof.
The main issue is that in distributed sketching, every edge is \emph{shared between} two players.
In particular, the other ``endpoint'' in $S\setminus T$ also knows this edge.
Therefore, any vertex $u$ who has $v$ as its neighbor can simply tell the referee this fact, and the referee learns an element (vertex) in $S\setminus T$ from the message of that vertex.
To resolve this issue, we put a ``large number'' of independent ``$v$'' and a ``small number'' of ``other endpoints'' in the graph, so that the total amount of information revealed by the other ``endpoints'' becomes negligible.
More specifically (see also Figure~\ref{fig_hard_a}), we random permute the labels, and pick a set of vertices $V^m$ to be all potential $v$.
For each $v^m_i\in V^m$, we independently construct a $\ur$ instance $(S_i, T_i)$ such that all vertices in $T_i$ have $v^m_i$ as their only neighbor ($V^l_i$ in Figure~\ref{fig_hard_a}) and all vertices in $S_i\setminus T_i$ are contained in a \emph{much smaller} set $V^r$.
Each $v^m_i$ sees a randomly labeled set of neighbors $S_i$, and as in $\ur$, the player does not know $T_i$ .
Moreover, since $|V^r|\ll |V^m|$, the total information that can be revealed by the other ``endpoint'' of $S\setminus T$ is at most $|V^r|\cdot \poly\log n\ll |V^m|$ (otherwise \emph{some} vertex in $V^r$ must send a very long message).
For an \emph{average} $v^m_i$, this information is negligible.
By a standard information theoretic argument, we can show that for an {average} $v^m_i$, even if the referee does not receive messages from $V^r$, he can still find a neighbor of $v^m_i$ in $V^r$ with high probability.
It then implies that if there is a spanning forest protocol, then one can solve $\ur$ with the same communication and approximately the same error probability.

The final graph $G$ consists of $\sqrt{n}$ \emph{independent} blocks of size $\sqrt{n}$, where each block is constructed as above.
If the referee can find a spanning forest with constant probability, i.e., find a spanning tree in all blocks, then one can show that for one block, the referee must be able to find its spanning tree with probability $1-O(1/\sqrt{n})$.
Hence, by applying the above argument on one block, we may reduce the problem from $\ur$ with error probability $\approx 1/\sqrt{n}$ and $U=|V^r|=n^{\Theta(1)}$.
As we mentioned above, there is an $\Omega(\log^3 n)$ $\ur$ lower bound under this setting of parameters, implying an $\Omega(\log^3 n)$ lower bound for spanning forest.

\begin{figure}
	\begin{center}
		\begin{subfigure}{0.3\linewidth}
			\centering
			\begin{tikzpicture}[vtx/.style={draw, circle, inner sep=0, minimum size=3pt}]
				\foreach \i in {0,1,2,3,4}
					\node[vtx] (A\i) at (0pt, \i*25 pt) {};
				\foreach \i in {3}
					\foreach \j in {0,1,2}
						\node[vtx] (B\i\j) at (-40pt, \i*25+\j*5-5 pt) {};
				\foreach \i in {1,4}
					\foreach \j in {1}
						\node[vtx] (B\i\j) at (-40pt, \i*25+\j*5-5 pt) {};
				\foreach \i in {0,2}
					\foreach \j in {0,2}
						\node[vtx] (B\i\j) at (-40pt, \i*25+\j*5-5 pt) {};
				\foreach \i in {0,1,2}
					\node[vtx] (C\i) at (40pt, \i*35+15 pt) {};
				\draw [rounded corners=5pt] (-8pt, -10pt) rectangle (8pt, 110pt);
				\draw [rounded corners=3pt] (-45pt, 39pt) rectangle (-35pt, 61pt);
				\draw [rounded corners=4pt] (33pt, 7pt) rectangle (47pt, 93pt);

				\node at (-40pt, -20pt) {\scriptsize $V^l$};
				\node at (0, -20pt) {\scriptsize $V^m$};
				\node at (40pt, -20pt) {\scriptsize $V^r$};
				\foreach \i in {1,2,3,4,5}
					\node at (-55pt, \i*25-25 pt) {\scriptsize $V^l_\i$};

				\draw (B41) -- (A4) -- (C1);
				\draw (A4) -- (C0);

				\draw (B22) -- (A2) -- (C0);
				\draw (B20) -- (A2);
			\end{tikzpicture}
		\caption{}\label{fig_hard_a}
		\end{subfigure}
		\quad
		\begin{subfigure}{0.3\linewidth}
			\centering
			\begin{tikzpicture}[vtx/.style={draw, circle, inner sep=0, minimum size=3pt}]
				\foreach \i in {0,1,2,3,4}
					\node[vtx] (A\i) at (0pt, \i*25 pt) {};
				\foreach \i in {3}
					\foreach \j in {0,1,2}
						\node[vtx] (B\i\j) at (-40pt, \i*25+\j*5-5 pt) {};
				\foreach \i in {1,4}
					\foreach \j in {1}
						\node[vtx] (B\i\j) at (-40pt, \i*25+\j*5-5 pt) {};
				\foreach \i in {0,2}
					\foreach \j in {0,2}
						\node[vtx] (B\i\j) at (-40pt, \i*25+\j*5-5 pt) {};
				\foreach \i in {0,1}
					\foreach \j in {0,1}
						\node[vtx] (C\i\j) at (40pt, \i*60+\j*30+5 pt) {};
				\draw [rounded corners=5pt] (-8pt, -10pt) rectangle (8pt, 110pt);
				\draw [rounded corners=3pt] (-45pt, 39pt) rectangle (-35pt, 61pt);
				\draw [rounded corners=4pt] (33pt, -5pt) rectangle (47pt, 45pt);
				\draw [rounded corners=4pt] (33pt, 55pt) rectangle (47pt, 105pt);

				\node at (-40pt, -20pt) {\scriptsize $V^l$};
				\node at (0, -20pt) {\scriptsize $V^m$};
				\node at (40pt, -20pt) {\scriptsize $V^r$};
				\foreach \i in {1,2,3,4,5}
					\node at (-55pt, \i*25-25 pt) {\scriptsize $V^l_\i$};
				\node at (57pt, 20pt) {\scriptsize $V^r_1$};
				\node at (57pt, 80pt) {\scriptsize $V^r_2$};

				\draw (B41) -- (A4) -- (C10);
				\draw (A4) -- (C11);

				\draw (B22) -- (A2) -- (C00);
				\draw (B20) -- (A2);
			\end{tikzpicture}
		\caption{}\label{fig_hard_b}
		\end{subfigure}
	\end{center}
	\caption{$T_i$ is $V^l_i$, and $S_i\setminus T_i$ is contained in $V^r$. }\label{fig_hard}
\end{figure}
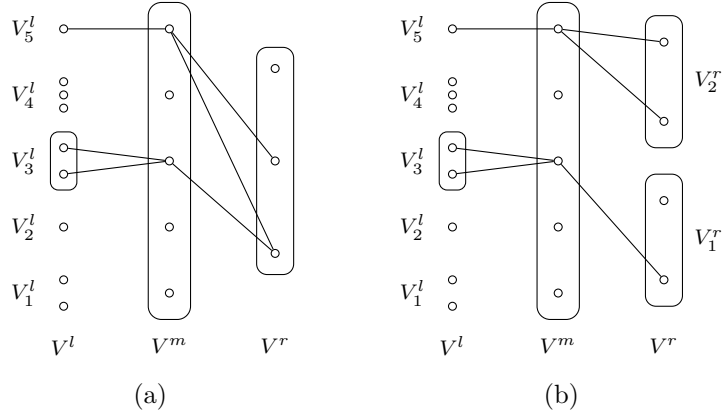

\subsection{Overview of our reduction}
To make a reduction from $\urdec$ to \con{}, we begin by modifying the construction for each block (see Figure~\ref{fig_hard_b}).
We split the set $V^r$ into two sets $V^r_1$ and $V^r_2$.
Then for each vertex $v^m_i\in V^m$, we ensure that its neighbors in $V^r$ are either all in $V^r_1$ or all in $V^r_2$.
As before, the neighborhood of $v^m_i$ corresponds to a set $S_i$, its neighbors in $V^l_i$ corresponds to its subset $T_i$.
Now, let $P_1=V^r_1$ and $P_2=V^r_2$, then $S_i\setminus T_i$ is either a subset of $P_1$ or a subset of $P_2$.
Based on which case it is, $v^m_i$ is either only connected to $V^r_1$, or only connected to $V^r_2$.
What remains is to combine the blocks into a graph $G$ that forces the players to solve $\urdec$ instances with high probability (see Figure~\ref{fig_hard_2}).

For each block, we construct two \emph{identical} copies of a subgraph as above, and denote their vertex sets by ${}^{+}V^l,{}^{+}V^m, {}^{+}V^r$ and ${}^{-}V^l,{}^{-}V^m, {}^{-}V^r$ respectively.
Then, we add four special vertices $s_1,s_2,t_1,t_2$ to the block.
We connect $s_1$ to a \emph{random} ${}^{+}v^m_i$, and connect $s_2$ to its copy ${}^{-}v^m_i$.
Then we connect $t_1$ to all vertices in ${}^{+}V^r_1$ and ${}^{-}V^r_2$, and connect $t_2$ to all vertices in ${}^{-}V^r_1$ and ${}^{+}V^r_2$.
Now, the block has two connected components.
It is easy to verify that each vertex is either in the same connected component with $t_1$ or $t_2$, but $t_1$ and $t_2$ are in different components.
Moreover, $s_1$ and $s_2$ are also in different components.
This is because if ${}^{+}v^m_i$, the only neighbor of $s_1$, has a neighbor in ${}^{+}V^r_1$, then ${}^{-}v^m_i$ has a neighbor in ${}^{-}V^r_1$, in which case, $s_1$ and $t_1$ are in the same connected component, and $s_2$ and $t_2$ are in the same connected component, and vice versa.
Thus, we construct a block such that either
\begin{compactenum}[(i)]
	\item $s_1$ and $t_1$ are in the same component, $s_2$ and $t_2$ are in the same component; or
	\item $s_1$ and $t_2$ are in the same component, $s_2$ and $t_1$ are in the same component.
\end{compactenum}
Deciding which is the case requires the referee to determine for this random vertex $v^m_i$, whether its neighbors are in $V^r_1$ or $V^r_2$, i.e., ``solving'' the $\urdec$ instance embedded at $v^m_i$.

We independently construct $\sqrt{n}$ such blocks, and add an edge between $t_1$ [resp. $t_2$] of block $i$ and $s_1$ [resp. $s_2$] of block $i+1$ (where block $\sqrt{n}+1$ is block $1$).
This graph is connected if and only if there is an \emph{odd} number of blocks where case (ii) above happens.
That is, deciding if the whole graph is connected is equivalent to computing the XOR of $\sqrt{n}$ bits, one for each block.
It turns out that in the distributed sketching model, if the referee computes the XOR with $3/4$ probability, then for most blocks, the referee can decide which case this block is in with probability $1-O(1/\sqrt{n})$.
By the same argument as before, it allows us to reduce \con{} from $\urdec$ with $U=n^{\Theta(1)}$ and $\delta=1/n^{\Theta(1)}$.
The formal proof can be found in Section~\ref{sec_red}.

\begin{figure}
	\begin{center}
		\begin{subfigure}{0.9\linewidth}
			\centering
			\begin{tikzpicture}[vtx/.style={draw, circle, inner sep=0, minimum size=3pt}]
				\foreach \c in {-1,1} {
					\foreach \i in {0,1,2,3,4}
						\node[vtx] (A\c\i) at (-\c*85+\c*0pt, \i*25 pt) {};
					\foreach \i in {3}
						\foreach \j in {0,1,2}
							\node[vtx] (B\c\i\j) at (-\c*85-\c*40pt, \i*25+\j*5-5 pt) {};
					\foreach \i in {1,4}
						\foreach \j in {1}
							\node[vtx] (B\c\i\j) at (-\c*85-\c*40pt, \i*25+\j*5-5 pt) {};
					\foreach \i in {0,2}
						\foreach \j in {0,2}
							\node[vtx] (B\c\i\j) at (-\c*85-\c*40pt, \i*25+\j*5-5 pt) {};
					\foreach \i in {0,1}
						\foreach \j in {0,1}
							\node[vtx] (C\c\i\j) at (-\c*85+\c*40pt, \i*60+\j*30+5 pt) {};
					\draw [rounded corners=5pt] (-\c*85-\c*8pt, -10pt) rectangle (-\c*85+\c*8pt, 110pt);
					\draw [rounded corners=3pt] (-\c*85-\c*45pt, 39pt) rectangle (-\c*85-\c*35pt, 61pt);
					\draw [rounded corners=4pt] (-\c*85+\c*33pt, -5pt) rectangle (-\c*85+\c*47pt, 45pt);
					\draw [rounded corners=4pt] (-\c*85+\c*33pt, 55pt) rectangle (-\c*85+\c*47pt, 105pt);

					\node at (-\c*85-\c*40pt, -20pt) {\scriptsize \ifthenelse{\c=1}{${}^+V^l$}{${}^-V^l$}};
					\node at (-\c*85+\c*0 pt, -20pt) {\scriptsize \ifthenelse{\c=1}{${}^+V^m$}{${}^-V^m$}};
					\node at (-\c*85+\c*40pt, -20pt) {\scriptsize \ifthenelse{\c=1}{${}^+V^r$}{${}^-V^r$}};
					\foreach \i in {1,2,3,4,5}
						\node at (-\c*85-\c*55pt, \i*25-25 pt) {\scriptsize \ifthenelse{\c=1}{${}^+V^l_\i$}{${}^-V^l_\i$}};
					\node at (-\c*85+\c*57pt, 0pt) {\scriptsize \ifthenelse{\c=1}{${}^+V^r_1$}{${}^-V^r_1$}};
					\node at (-\c*85+\c*57pt, 100pt) {\scriptsize \ifthenelse{\c=1}{${}^+V^r_2$}{${}^-V^r_2$}};

					\draw (B\c41) -- (A\c4) -- (C\c10);
					\draw (A\c4) -- (C\c11);

					\draw (B\c22) -- (A\c2) -- (C\c00);
					\draw (B\c20) -- (A\c2);
				}
				\draw [dashed, rounded corners=3pt] (-154pt, -30pt) rectangle (-15pt, 120pt);
				\draw [dashed, rounded corners=3pt] (154pt, -30pt) rectangle (15pt, 120pt);
				\node at (-85pt, -40pt) {\scriptsize copy 1};
				\node at (85pt, -40pt) {\scriptsize copy 2};

				\node [vtx, label={\scriptsize $t_2$}] (t2) at (0pt, 70pt) {};
				\node [vtx, label={\scriptsize $t_1$}] (t1) at (0pt, 30pt) {};
				\draw [line width=1.5pt] (C-110) -- (t1);
				\draw [line width=1.5pt] (C-111) -- (t1);
				\draw [line width=1.5pt] (C100) -- (t1);
				\draw [line width=1.5pt] (C101) -- (t1);
				\draw [line width=1.5pt] (C-100) -- (t2);
				\draw [line width=1.5pt] (C-101) -- (t2);
				\draw [line width=1.5pt] (C110) -- (t2);
				\draw [line width=1.5pt] (C111) -- (t2);

				\node [vtx, label={\scriptsize $s_2$}] (s2) at (180pt, 40pt) {};
				\node [vtx, label={\scriptsize $s_1$}] (s1) at (-180pt, 40pt) {};
				\draw (s2) edge[line width=1.5pt, out=180, in=310] (A-12);
				\draw (s1) edge[line width=1.5pt, out=0, in=230] (A12);
			\end{tikzpicture}
			\caption{}
		\end{subfigure}
		\begin{subfigure}{0.5\textwidth}
			\centering
			\begin{tikzpicture}[vtx/.style={draw, circle, inner sep=0, minimum size=3pt}]
				\node [vtx, label={\scriptsize $s_1$}] (s1) at (0,20pt) {};
				\node [vtx, label={\scriptsize $s_2$}] (s2) at (0,0) {};
				\node [vtx, label={\scriptsize $t_1$}] (t1) at (70pt,20pt) {};
				\node [vtx, label={\scriptsize $t_2$}] (t2) at (70pt,0) {};
				\draw (s1) edge[decorate, decoration=snake, thick] (t1);
				\draw (s2) edge[decorate, decoration=snake, thick] (t2);
				\node [vtx, label={\scriptsize $s_1$}] (s1a) at (150pt,20pt) {};
				\node [vtx, label={\scriptsize $s_2$}] (s2a) at (150pt,0) {};
				\node [vtx, label={\scriptsize $t_1$}] (t1a) at (220pt,20pt) {};
				\node [vtx, label={\scriptsize $t_2$}] (t2a) at (220pt,0) {};
				\draw (s1a) edge[decorate, decoration=snake, thick] (t2a);
				\draw (s2a) edge[decorate, decoration=snake, thick] (t1a);

				\node at (-15pt, 12pt) {\scriptsize (i)};
				\node at (135pt, 12pt) {\scriptsize (ii)};
			\end{tikzpicture}
			\caption{}
		\end{subfigure}
	\end{center}
	\caption{Subfigure (a) demonstrates one block. Subfigure (b) shows the two possible cases of the connectivity of $s_1,s_2,t_1,t_2$ within the block.}\label{fig_hard_2}
\end{figure}
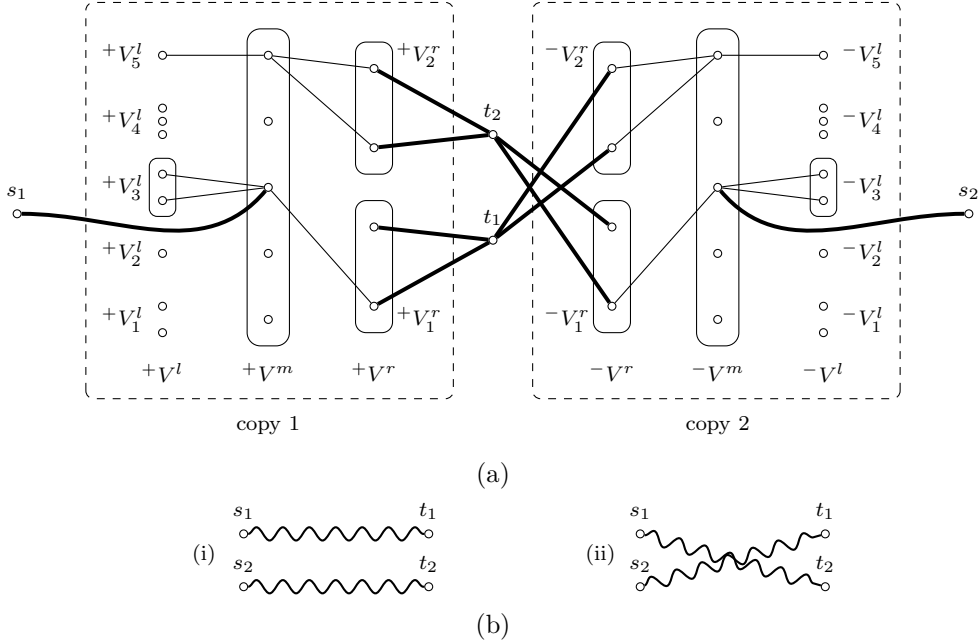

\subsection{Universal relation lower bound}
The previous lower bound for $\ur$~\cite{KNPWWY17} uses an information theoretic argument.\footnote{\cite{KNPWWY17} provided two proofs, we only discuss their first proof here.}
Roughly speaking, the goal is to show that many elements of $S$ can be reconstructed from Alice's message $\pi$ (possibly given some other information about $S$).
Then it would imply that $\pi$ contains lots of information, and thus, it has to be long.
As a demonstration of the argument, let us assume for now, that the protocol always succeeds (i.e., error probability $\delta=0$).
Given Alice's message $\pi$, the reconstruction algorithm can set $T=\emptyset$ and simulate Bob.
Bob returns an element $x_1$ in $S\setminus \emptyset =S$, recovering one element.
Next, it sets $T=\{x_1\}$, and simulates Bob again, which returns $x_2\in S\setminus\{x_1\}$.
Then, it sets $T=\{x_1,x_2\}$, and so on.
This procedure reconstructs the whole set $S$ from $\pi$.
Therefore, the message length must be at least $\Omega(U)$.

The above exact argument breaks when $\delta>0$. 
In the first round, we set $T=\emptyset$ and Bob returns an element $x_1\in S$ with probability $1-\delta$.
However, in the second round where $T=\{x_1\}$, the protocol no longer succeeds with probability $1-\delta$, since we are using the \emph{same} randomness in both rounds.
In the other words, we have to condition on the randomness leading to a first-round output of $x_1$, which distorts its distribution.
Since such an event may have probability as low as $1/|S|$, conditioning on it could significantly affect the error probability.
To resolve this issue, \cite{KNPWWY17} applies the following strategy.
In the second round, instead of setting $T=\{x_1\}$, we also ``mix'' another $\alpha$-fraction of the remaining elements of $S$ into $T$ for some $\alpha\in(0,1)$.
That is, we take a random subset of $S\setminus \{x_1\}$ of size $\alpha\cdot |S|$ and give it to the reconstruction algorithm for free.
The algorithm sets $T$ to be the union of $\{x_1\}$ and this subset, and simulates Bob.
In this way, the condition becomes more mild -- instead of conditioning on the randomness leading to a first-round output of $x_1$, we only condition on the randomness leading to a first-round output \emph{that is in $T$}.
It turns out that by mixing in an $\alpha$-fraction of the remaining elements into $T$ in each round for $\alpha=1/\log (1/\delta)$, one can ensure that the later rounds succeed with high probability.
This argument can therefore be applied for $\Omega(\log (1/\delta)\log U)$ rounds.
Beyond the elements that are given, the algorithm reconstructs $\Omega(\log (1/\delta)\log U)$ extra elements in $S$ in expectation.
It implies that the message length must be at least $\Omega(\log (1/\delta)\log^2 U)$.

\subsection{Lower bound for decision version}
Recall that in the decision version, Bob does not only get $T$, he also gets a bipartition $(P_1,P_2)$ of $[U]\setminus T$ such that $S\setminus T$ is a subset of either $P_1$ or $P_2$.
Therefore, in order to simulate Bob, we must give the reconstruction algorithm a valid bipartition.
This can be deadly -- the bipartition $(P_1,P_2)$ contains at least $O(|S|)$ bits of information about $S$, whereas Bob's output only contains one bit.
Hence, we could at most recover $O(\log U)$ bits from Alice's message $\pi$ before giving away the entire set $S$, which only has $O(|S|\log U)$ bits.

The key component in our lower bound proof is to analyze the information that can be learned from $\pi$, \emph{without being given a valid partition $(P_1, P_2)$}.
For simplicity, let us assume $\delta=0$ and $T=\emptyset$ for now, i.e., let us focus on the first round in the previous argument for zero-error protocols.
Given Alice's message, we can enumerate \emph{all} possible partitions $(P_1, P_2)$ of $[U]$, and simulate Bob on them.
Suppose for a partition $(P_1, P_2)$, Bob returns that $S\setminus T(=S)$ is a subset of $P_1$.
Although we have no way to verify whether it is even a valid input, Bob's output at least tells us that $S$ \emph{cannot} be a subset of $P_2$.
Since if $S\subseteq P_2$, $(P_1, P_2)$ would be a valid partition, in which case, Bob has to output $P_2$.
Thus, for \emph{every} partition $(P_1, P_2)$, we rule out some possibilities for set $S$ by simulating Bob.
The key question here is how much information we can learn by simulating Bob on all partitions and $T=\emptyset$.

Suppose we could show that if all remaining possibilities for $S$ contain some particular element $x_1$, i.e., we have learned that $x_1$ must be in $S$, then we could proceed as in the previous argument.
However, this is not always the case.
An easy counterexample is that for some integer $k>1$, Alice picks $k$ elements from $S$ and another $k-1$ elements from $[U]$, and sends the set $W$ of these $2k-1$ elements to Bob (without annotating which ones are from $S$).
Then for $T=\emptyset$ and any $(P_1,P_2)$, Bob can just output the part that contains at least $k$ elements from $W$.
This part must contain at least one element from $S$, and by the assumption that either $S\subseteq P_1$ or $S\subseteq P_2$, it must contain $S$.
However, if we apply the above strategy enumerating all possible $(P_1,P_2)$ and simulating Bob, the remaining possibilities for $S$ will simply be all $S$ that contain at least $k$ elements from $W$.
There is not an element $x_1$ that is contained in all remaining possibilities for $S$.

However, this counterexample is not a bad case for the whole argument, because by telling the reconstruction algorithm which of the $k$ elements in $W$ belong to $S$ using $O(k)$ extra bits, the algorithm can recover $k$ elements in $S$, which is worth $k\log U$ bits of information.
Then the previous argument still works.
The main technical lemma in this paper is a structural result that asserts this is essentially the only possible type of counterexamples (see Lemma~\ref{lem_intersect}).
\begin{lemma}[main technical lemma, informal]\label{lem_tech_informal}
	Fix any deterministic protocol.
	Suppose for some collection $\cS$ of Alice's inputs, Alice sends the same message $\pi$ on all $S\in\cS$, and Bob is able to compute $\urdec$ for a random $S\in\cS$ and $T=\emptyset$ with probability at least $3/4$, then there must exist one set $S^*\in\cS$ and some integer $k\geq 1$ such that $S^*$ has intersection size $k$ with at least $\exp(-\Theta(k))$-fraction of the sets $S\in\cS$.
\end{lemma}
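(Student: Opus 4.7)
The plan is to exploit the determinism of the protocol together with Alice sending the same message $\pi$ on every $S\in\cS$: Bob's answer is then a function of only $(T,P_1,P_2)$ and $\pi$, and not of Alice's actual input. Call $S\in\cS$ \emph{good} if Bob's answer is correct for every valid partition $(P_1,P_2)$ when $T=\emptyset$; under the natural reading of the success hypothesis, the good sets $\cS^{\good}\subseteq\cS$ have density at least $3/4$.

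\textbf{Step 1 (pairwise intersection).} For any $S,S'\in\cS^{\good}$, I would apply Bob to the partition $(P_1,P_2)=(S,[U]\setminus S)$ with $T=\emptyset$. It is valid for input $S$ because $S\subseteq P_1$, so Bob must output $1$. If $S\cap S'=\emptyset$, the same partition is also valid for input $S'$ (now $S'\subseteq P_2$), and since Bob's answer is independent of Alice's actual input it remains $1$, which is wrong for $S'$. That contradicts $S'\in\cS^{\good}$, so $|S\cap S'|\geq 1$ for every pair of good sets.

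\textbf{Step 2 (geometric-series bound).} Fix any $S^*\in\cS^{\good}$ as the witness. Every good $S$ satisfies $|S\cap S^*|\geq 1$, hence
\[
\sum_{k\geq 1}\Pr_{S\in\cS}\bigl[|S\cap S^*|=k\bigr]\;\geq\;\Pr_{S\in\cS}[S\in\cS^{\good}]\;\geq\;3/4.
\]
If there were an absolute constant $C>\ln(7/3)$ such that $\Pr[|S\cap S^*|=k]<e^{-Ck}$ for every $k\geq 1$, then the left-hand side would be bounded by $\sum_{k\geq 1}e^{-Ck}=1/(e^C-1)<3/4$, a contradiction. Therefore some $k\geq 1$ must witness $\Pr_{S\in\cS}[|S\cap S^*|=k]\geq e^{-Ck}$, which is exactly the claimed $\exp(-\Theta(k))$-fraction statement.

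\textbf{Main obstacle.} The delicate point is exactly how ``success with probability at least $3/4$'' is quantified in the formal version of the lemma. The argument above uses the strong reading that Bob must handle every valid $(P_1,P_2)$ for a $3/4$-fraction of $S$; if the formal statement instead averages over partitions or only asks for worst-case-over-$(P_1,P_2)$ success on a random $S$, Step~1 must be softened to a statement of the form ``few good pairs can be disjoint, because the distinguishing partition $(S,[U]\setminus S)$ is witnessed simultaneously by both''. That yields pairwise intersection only on a constant fraction of pairs, which still feeds Step~2's geometric summation and produces the same $\exp(-\Theta(k))$ bound after adjusting the constants; the geometric-tail idea is robust, and I expect any such reformulation to add bookkeeping but not change the skeleton.
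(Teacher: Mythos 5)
There is a genuine gap, and it sits exactly where you flagged it: the success hypothesis is average-case over the partition, and your argument does not survive that reading. In the paper's formal version (and in the hard distribution $\cD_{\urdec}$), the $3/4$ success probability is over a random $S\in\cS$ \emph{and} a random valid partition $(P_1,P_2)$ drawn blockwise at random; it is not the statement that a $3/4$-fraction of sets are handled correctly on \emph{every} valid partition. Your Step~1 needs Bob to be correct on the single adversarial partition $(S,[U]\setminus S)$, but under the hard distribution this particular partition carries exponentially small probability mass conditioned on either $S$ or $S'$, so a $3/4$ average success rate gives no control whatsoever over Bob's behaviour there. The proposed softening does not repair this: each disjoint pair $(S,S')$ forces an error only on that one exponentially unlikely partition, so even if \emph{all} pairs in $\cS$ were disjoint, the total error forced by your witnesses would be negligible and perfectly consistent with $3/4$ average success. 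Moreover, the conclusion you hope to salvage (``a constant fraction of pairs intersect'') is stronger than what the hypothesis yields; the correct conclusion is the weighted statement $\sum_{S_1,S_2\in\cS}\bigl(2^{|S_1\cap S_2|}-1\bigr)\geq |\cS|^2/4$, which allows the intersecting pairs to be an $\exp(-\Theta(k))$-fraction concentrated at intersection size $k$ (this is precisely why the lemma is phrased with a $k$-dependent fraction, and why the paper's counterexample discussion with the set $W$ exists).

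The missing idea is the paper's anti-concentration (second-moment) argument, which is what actually converts average-case success into intersection structure. For a fixed $S\in\cS$ and $T=T_i$, the hard distribution generates the partition blockwise at random; letting $A_1$ and $A_2$ count the sets of $\cS$ consistent with the two sides, Bob's conditional error on a given partition is at least $\min\{A_1,A_2\}/(A_1+A_2)$, since his answer depends only on $(T,P_1,P_2,\pi)$. Averaging over partitions, the error is at least $\tfrac12$ minus a term proportional to $\E\bigl[|A_1-A_2|\bigr]$, and a variance computation shows $\E\bigl[(A_1-\E A_1)^2\bigr]$ is exactly $2^{-2(m-|T_i|)}\sum_{S_1,S_2}\bigl(2^{|(S_1\cap S_2)\setminus T_i|}-1\bigr)$. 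So if the pairwise-intersection mass were small, $A_1$ and $A_2$ would concentrate at a common mean, Bob's error would exceed $1/4$, contradiction. Only after this step does the dyadic decomposition — your Step~2 is the right shape for it — extract a single $k$ with at least a $2^{-2k}/4$ fraction of intersecting pairs, and then an averaging step produces the witness $S^*$. In short: your geometric-tail step is fine, but the heart of the lemma is the passage from average-case partition success to the weighted pairwise-intersection bound, and the single-witness-partition trick cannot deliver it.
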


In the other words, maybe not all sets $S$ consistent with $\pi$ contain the same element $x_1$, but the lemma implies that there must exist a not-so-small fraction of the sets that contain \emph{many} same elements.
This is because by averaging, at least $|S^*|^{-k}\cdot \exp(-\Theta(k))$-fraction of the sets has the \emph{same} intersection of size $k$ with $S^*$.
When $|S^*|\ll U$, this imposes a structure on the collection of sets consistent with $\pi$.
That is, a $(\gg U^{-k})$-fraction of sets contain the same $k$ elements.
We can approximately view it as ``describing'' $k$ elements using $\ll k\log U$ bits.
This lemma also extends to $T\neq \emptyset$.
This allows us to mimic the previous argument.

We first apply Yao's minimax principle to fix the randomness of the protocol.
Our argument starts with the collection $\cS$ of all $S$ on which Alice sends message $\pi$, and $T=\emptyset$.
Then we
\begin{compactitem}
	\item[(a)] apply Lemma~\ref{lem_tech_informal} and find $k$ elements such that $(\gg U^{-k})$-fraction of $\cS$ contain all of them, add those $k$ elements to $T$, and remove all sets in $\cS$ that \emph{do not} contain $T$ (corresponding to recovering elements from $S$ in the previous argument);
	\item[(b)] next pick $\alpha (|S|-|T|)$ (we only consider $S$ of the same size) \emph{random} elements from $[U]\setminus T$, add those elements to $T$, and remove all sets in $\cS$ that {do not} contain $T$ (corresponding to ``mixing'' in $\alpha$-fraction of random remaining elements in $S$).
\end{compactitem} 
We repeatedly apply these two steps, and eventually we have restricted all sets in $\cS$ to contain $|S|$ specific elements.
That is, the final size of $\cS$ can be at most $1$.
Similar to the previous argument, we can show that by mixing in a random $\alpha$-fraction each time, the average success probability of $\cS$ in the later rounds will be at least $3/4$, allowing us to apply Lemma~\ref{lem_tech_informal}.

To see why this argument implies a lower bound on $|\pi|$, observe that in step (b), the size of $\cS$ drops as expected -- by a factor of $\binom{|S|-|T|}{\alpha (|S|-|T|)}/\binom{U-|T|}{\alpha (|S|-|T|)}$, the probability that a set $S$ contains $\alpha (|S|-|T|)$ random elements outside $T$. 
If the size of $\cS$ also dropped as expected in step (a), then in the whole process, the size of $\cS$ would have dropped by the expected factor of $\binom{U}{|S|}^{-1}$, the probability that a set $S$ contains $|S|$ random elements.
Combining it with the final size of $\cS$ being at most $1$, we would only have obtained a trivial upper bound of $\binom{U}{|S|}$ on the initial size of $\cS$.
But in step (a), the actual drop of the size of $\cS$ is much slower.
Thus, in total, the size of $\cS$ dropped by a factor of $\gg \binom{U}{|S|}^{-1}$, implying a non-trivial upper bound of $\ll\binom{U}{|S|}$ on the initial size of $\cS$.
Recall that the initial $\cS$ was the collection of $S$ on which Alice sends $\pi$.
That means there must be $\ll\binom{U}{|S|}$ different inputs that can have Alice send the same message $\pi$.
However, this argument applies to \emph{all} messages $\pi$.
Thus, we must have many different messages in order to cover all $\binom{U}{|S|}$ inputs $S$, implying a lower bound on $|\pi|$.

The actual proof is slightly different due to technical reasons, see the next section.


\section{Lower Bound for $\urdec$}\label{sec_urdec_lb}

	Recall that in the $\urdec$ problem, Alice gets a set $S\subseteq [U]$, Bob gets a proper subset $T\subsetneq S$, as well as a partition $(P_1, P_2)$ of $[U]\setminus T$.
	It is guaranteed that either $S\setminus T\subseteq P_1$, or $S\setminus T\subseteq P_2$.
	In the communication game, Alice sends one message $\pi$ to Bob, and Bob must decide whether $P_1$ or $P_2$ contains $S\setminus T$ with probability at least $1-\delta$.
	In this section, we prove the following lower bound for the $\urdec$ problem.
	\newcommand{\lemurlbdecisioncont}{For any $U$ and $\delta$ such that $\exp(-U^{1/4})<\delta<1/\log^4 U$, there is an input distribution $\cD_{\urdec}$ such that any one-way communication protocol that succeeds with probability at least $1-\delta$ on a random instance sampled from $\cD_{\urdec}$ must have expected communication cost at least $\Omega(\log(1/\delta)\log^2 U)$.}
	\begin{lemma}\label{lem_ur_lb_decision}
		\lemurlbdecisioncont
	\end{lemma}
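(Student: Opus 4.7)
The plan is to prove the lemma by the ``peeling + mixing'' argument sketched in the overview, converting the iterated application of Lemma~\ref{lem_tech_informal} into an upper bound on the number of inputs on which Alice can send the same message. Concretely, fix the target size $s = \Theta(\log U)$ and let $\cD_{\urdec}$ sample $S$ uniformly from $\binom{[U]}{s}$ together with a random $T \subsetneq S$ (and a valid partition $(P_1,P_2)$ determined by $S$ and $T$); apply Yao's minimax to reduce to a deterministic one-way protocol $\Pi$ with average error $\leq \delta$ under $\cD_{\urdec}$. For each message $\pi$ in the range of $\Pi$, let $\cS_\pi \subseteq \binom{[U]}{s}$ be the collection of $S$ on which Alice sends $\pi$. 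The goal will be an upper bound of the form $|\cS_\pi| \leq \binom{U}{s}\cdot \exp\!\bigl(-\Omega(\log(1/\delta)\log^2 U)\bigr)$; summing over $\pi$ then yields the stated lower bound on expected communication.

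To bound $|\cS_\pi|$, I would run an iterative process that maintains a growing set $T \subseteq [U]$ contained in every remaining $S \in \cS$, where $\cS$ starts as $\cS_\pi$ and $T$ starts empty. Each outer round alternates two moves. \emph{Peel (step (a)):} apply (the $T \neq \emptyset$ version of) Lemma~\ref{lem_tech_informal} to the current $\cS$ to extract an integer $k \geq 1$ and $k$ elements $x_1,\dots,x_k \in [U]\setminus T$ such that at least $s^{-k}\exp(-\Theta(k))$ of the sets in $\cS$ contain all of $x_i$; keep only those sets and add the $x_i$'s to $T$. \emph{Mix (step (b)):} choose a uniformly random $R \subseteq [U]\setminus T$ of size $\alpha(s-|T|)$ for $\alpha = \Theta(1/\log(1/\delta))$, restrict $\cS$ to sets containing $R$, and add $R$ to $T$. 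Iterate until $|T|=s$, at which point $|\cS| \leq 1$.

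The accounting compares, step by step, the factor by which $|\cS|$ shrinks to the ``uniform baseline'' shrinkage of a truly random $S$. In step (b) the actual and baseline factors coincide, giving $\binom{s-|T|}{\alpha(s-|T|)}/\binom{U-|T|}{\alpha(s-|T|)}$. In step (a) the actual shrinkage is at most a factor $s^{k}\exp(\Theta(k))$, whereas the baseline is $\binom{U-|T|}{k}/\binom{s-|T|}{k} \approx (U/s)^k$; the gain over baseline is therefore $(U/s)^k \cdot s^{-k}\exp(-\Theta(k)) \cdot ({\text{binomial correction}}) = \exp(\Omega(k\log U))$. Summing $k$ over all outer rounds and noting that the process must reach $|T|=s$ gives a total gain of $\exp\!\bigl(\Omega((\text{\# rounds})\cdot\log U)\bigr)$. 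The number of rounds is chosen to be $\Theta(\log(1/\delta)\log U)$ (by taking, e.g., geometrically decreasing ``levels'' of remaining size $s - |T|$, each contributing $\Theta(\log(1/\delta))$ rounds), so the gain compounds to $\exp(\Omega(\log(1/\delta)\log^2 U))$, which is exactly the target bound on $|\cS_\pi|/\binom{U}{s}$.

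\textbf{Main obstacle.} The delicate point is maintaining the success-probability hypothesis of Lemma~\ref{lem_tech_informal} across all $\Theta(\log(1/\delta)\log U)$ outer rounds. After step (a), the conditional error may jump by a constant factor, but the mixing step (b) is what controls it over many rounds: with $\alpha = \Theta(1/\log(1/\delta))$, restricting to sets containing the random $R$ conditions on an event of probability at least $\approx \exp(-\Theta(1/\alpha)) = \delta^{\Theta(1)}$ in expectation, so the conditional error grows by at most $\delta^{-O(1)}$ per round, and a careful martingale-style accounting (analogous to the $\ur$ argument of~\cite{KNPWWY17}) keeps the average error below $1/4$ throughout. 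The second subtle point is adapting Lemma~\ref{lem_tech_informal} to nontrivial $T$ and to the promise structure of $\urdec$ (valid partitions $(P_1,P_2)$ only exist for $S$'s in one side), which requires restricting attention to sets $S$ with $|S\setminus T|$ constant across the current $\cS$; I would enforce this size invariant explicitly at the start of each round, which is why the outer loop is most naturally parameterized by a decreasing sequence of ``remaining sizes'' $s - |T|$ rather than by a fixed number of rounds.
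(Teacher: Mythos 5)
Your plan is the same peel-and-mix scheme the paper uses, but the parameter setting breaks the accounting at the point where the $\log(1/\delta)$ factor is supposed to appear. You fix $|S|=s=\Theta(\log U)$, yet you also claim $\Theta(\log(1/\delta)\log U)$ outer rounds. Every peel step adds at least one element to $T$ and $|T|\le s$, so the number of rounds is at most $s=\Theta(\log U)$, which is asymptotically smaller than $\log(1/\delta)\log U$ throughout the lemma's range $\delta<1/\log^4 U$; even your own ``geometric levels'' count gives only $\Theta(\log s)\cdot\Theta(\log(1/\delta))=\Theta(\log(1/\delta)\log\log U)$ rounds. More fundamentally, the total gain of the argument is capped by (number of peeled elements)$\times$(log of the per-element gain)$\le s\cdot O(\log U)=O(\log^2 U)$, so with $s=\Theta(\log U)$ the method cannot produce the claimed $\Omega(\log(1/\delta)\log^2 U)$ bound at all. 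One must take $|S|$ polynomial in $U$: the paper sets $m=U^{1/3}$ with $[U]$ split into $m$ blocks of size $B=U^{2/3}$, so that $R=\Theta(\log(1/\delta)\log m)=\Theta(\log(1/\delta)\log U)$ rounds fit inside $S$ and each peeled element is worth $\log(B/16m)=\Theta(\log U)$; note also that the universe-to-set ratio must stay polynomially large so the per-element gain (roughly $U/s^2$ in your unstructured setting) is $U^{\Theta(1)}$.

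Two further gaps. First, your error control, ``the conditional error grows by at most $\delta^{-O(1)}$ per round,'' cannot be compounded over many rounds — after two such rounds the bound already exceeds a constant. The actual argument (Lemma~\ref{lem_singleton_prob}) shows that the conditional distribution of the current $T$ given $S\in\cS_i$ is pointwise within a \emph{single} factor $2^{O(1/\alpha)}=\delta^{-O(1)}$ of the uniform distribution over subsets of that size, uniformly over all rounds, because the distortion factors $\prod_{x\ge1}(1-(1-\alpha)^x)^{-1}$ converge; combined with the fact that the hard distribution puts weight $1/R$ on each size $t_r$ (so the per-size error is $O(R\delta)$) and a Markov argument, this keeps the conditional error below $1/4$ except with probability $2^{O(1/\alpha)}R^2\delta\le 1/2$. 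Your hard distribution must therefore also be matched to the sizes $t_r$ produced by the process, not just ``a random $T\subsetneq S$.'' Second, you invoke Lemma~\ref{lem_tech_informal} as a black box, but that statement is only informal; its formal counterpart (Lemma~\ref{lem_intersect}, proved via a second-moment bound on $|A_1-A_2|$ over random partitions) is the central new ingredient in the proof of Lemma~\ref{lem_ur_lb_decision}, and its proof exploits the one-element-per-block structure of $S$ (making the partition uniform and independent across blocks), which your plain uniform $S\in\binom{[U]}{s}$ does not provide. As written, the proposal leaves this key step unproven and uses parameters under which the target bound is unattainable.
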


	\paragraph{Hard distribution $\cD_{\urdec}$.} 
	Without loss of generality, assume $U$ is a perfect cube.
	Let $|S|=m=U^{1/3}$, and we view $[U]$ as $m$ disjoint \emph{blocks} of size $B=U^{2/3}$.
	Alice's input set $S$ is a uniformly random set of size $m$, with one element from each block.
	Let $\alpha=\frac{16}{\log 1/\delta}$, and $t_r=\lceil m\cdot (1-(1-\alpha)^r)+2r\rceil$ for $r=0,1,\ldots,R-1$ be all possible sizes of $T$, where $R=\lfloor \frac{1}{16\alpha}\log m\rfloor$.
	Bob's input set $T$ is a uniformly random subset of $S$ of size $t_r$, for a uniformly random $r$ in $\{0,\ldots,R-1\}$.
	Finally, we put the whole set $S\setminus T$ in either $P_1$ or $P_2$ randomly, and then put each element in $[U]\setminus S$ randomly and independently in $P_1$ or $P_2$, i.e., $(P_1,P_2)$ is a uniformly random partition of $[U]\setminus T$ conditioned on $S\setminus T\subseteq P_1$ or $S\setminus T\subseteq P_2$.

	Note that this distribution is valid when $\delta>\exp(-U^{1/4})$ and $U$ sufficiently large.
	In this case, we have $$t_r<m(1-(1-\alpha)^R)+2R<m-m^{7/8}+O(m^{3/4}\log m)<m,$$ and $t_r\geq 0$.
	Therefore, $T$ is always a proper subset of $S$.
	Also observe that
	\[
		t_{r+1}-t_r\geq m((1-\alpha)^r-(1-\alpha)^{r+1})+1\geq 1.
	\]

	\bigskip

	Suppose there is a randomized communication protocol with error probability at most $\delta$ and expected cost at most $C$.
	By Markov's inequality and union bound, we can fix the randomness of the protocol such that under $\cD_{\urdec}$, the error probability is at most $2\delta$ and the expected communication cost is at most $2C$.
	Thus, we may assume the protocol is deterministic.

	Fix such a deterministic protocol.
	By Markov's inequality and union bound again, for at least $1/2$ of Alice's set $S$, the error probability conditioned on $S$ is at most $8\delta$ and Alice sends a message of length at most $8C$ on $S$.
	Denote this collection of Alice's set by $\cS_{\good}$.
	Thus, $|\cS_{\good}|\geq \frac{1}{2}B^m$.
	Note that this collection could depend on the protocol.

	\bigskip

	To prove the lemma, we fix a message $\pi$ and consider the collection $\cS_0$ of Alice's sets $S\in\cS_{\good}$ on which Alice sends $\pi$.
	We pick $\pi$ that maximizes $|\cS_0|$, hence, $|\cS_0|\geq \frac{1}{2}B^m\cdot 2^{-8C}$.
	We will show that $\cS_0$ has to be small, which will imply a lower bound on $C$.

	To this end, we will describe a random process that generates a sequence of nested collections $\cS_0\supset \cS_1\supset \cS_2\supset\cdots\supset \cS_I$ and a sequence of sets $T_0\subset T_1\subset T_2\subset\cdots\subset T_I$ such that each $T_i$ is one possible input set for Bob, and $T_0=\emptyset$.
	For each $i$, all $S\in\cS_i$ will contain $T_i$.
	Clearly, $|\cS_i|$ is at most $B^{m-|T_i|}$ for every $i$.
	We will then show that $|\cS_i|/B^{m-|T_i|}$ increases rapidly as $i$ increases.
	Combining it with the fact that $|\cS_I|/B^{m-|T_I|}\leq 1$, we obtain that $|\cS_0|/B^{m-|T_0|}=|\cS_0|/B^m$ must be very small.

	\paragraph{Random process $\cA$.} Now let us describe the random process $\cA$ (see Figure~\ref{fig_ca}).
	To initialize, we fix a message $\pi$ of length at most $8C$, which is sent by Alice on the most number of sets $S\in\cS_{\good}$.
	Then let $\cS_0$ be all such sets, and let $T_0$ be the empty set, i.e., $|T_0|=t_{r_0}$ for $r_0=0$.
	Next, we iteratively generate collections $\cS_i$ and sets $T_i$ of size $t_{r_i}$ for some $r_i<R$.

	In round $i$, we construct $\cS_{i+1}$ and $T_{i+1}$ from $\cS_i$ and $T_i$.
	We first check if there are many pairs of sets in $\cS_i$ that intersect \emph{outside} $T_i$ (recall that all sets in $\cS_i$ contain $T_i$).
	More specifically, we try to find a $k_i\geq 1$ such that there are at least $\frac{|\cS_i|^2}{4\cdot 2^{2k_i}}$ pairs of sets in $\cS_i$ intersect on $k_i$ elements outside $T_i$.
	If such $k_i$ does not exist, the random process aborts (and it fails).
	Otherwise, we fix one such $k_i$, and by averaging, there must be a set $S^*$ such that it intersects at least $\frac{|\cS_i|}{4\cdot 2^{2k_i}}$ sets in $\cS_i$ on $k_i$ elements outside $T_i$.
	Again by averaging, there must be a subset $\Delta T\subseteq S^*$, of size $k_i$ and disjoint from $T_i$, such that at least $\frac{|\cS_i|}{4\cdot 2^{2k_i}\cdot m^{k_i}}$ sets $S\in\cS_i$ have $(S\cap S^*)\setminus T_i=\Delta T$.
	In particular, they all contain $T_i\cup \Delta T$.
	
	We then fix any such $S^*$ and $\Delta T$.
	$\Delta T$ will be added to $T_{i+1}$.
	The next set $T_{i+1}$ will have size $t_{r_{i+1}}$ for $r_{i+1}=r_i+k_i$.
	Observe that $|T_{i+1}|=t_{r_{i+1}}$ is at least $|T_i\cup\Delta T|=t_{r_i}+k_i$.
	Then we pick $t_{r_{i+1}}-t_{r_i}-k_i$ \textbf{random} blocks $\cB_i$ that are disjoint from $T_i\cup \Delta T$.
	For each block in $\cB_i$, we pick one element to add to $T_{i+1}$, and denote this set of $t_{r_{i+1}}-t_{r_i}-k_i$ elements by $\Delta T'$.
	By averaging, there exists such a set $\Delta T'$ such that at least $\frac{|\cS_i|}{4\cdot 2^{2k_i}\cdot m^{k_i}\cdot B^{t_{r_{i+1}}-t_{r_i}-k_i}}$ sets in $S\in\cS_i$ have $(S\cap S^*)\setminus T_i=\Delta T$ and $S\supseteq \Delta T'$.
	In particular, they all contain $T_i\cup\Delta T\cup\Delta T'$.
	We fix any such $\Delta T'$.
	Finally, let $T_{i+1}=T_i\cup \Delta T\cup \Delta T'$, and let $\cS_{i+1}$ be a collection of any $\frac{|\cS_i|}{4\cdot 2^{2k_i}\cdot m^{k_i}\cdot B^{t_{r_{i+1}}-t_{r_i}-k_i}}$ sets in $\cS_i$ that contain $T_{i+1}$.

	We repeat this process until $r_{i+1}\geq R$, in which case, $t_{r_{i+1}}$ becomes undefined.
	Then we do not sample random blocks, and simply let $T_{i+1}=T_i\cup \Delta T$ and $\cS_{i+1}$ be the collection of any $\frac{|\cS_i|}{4\cdot 2^{2k_i}\cdot m^{k_i}}$ sets in $\cS_i$ that contain $T_{i+1}$, and end the process.
	To ensure the random process is well-defined, for any step that ``fixes any such $X$'', we mean fixing $X$ to the lexicographically smallest.

	\bigskip
	Note that the only random part in the whole process is in sampling the $t_{r_{i+1}}-t_{r_i}-k_i$ random blocks $\cB_i$ in each round.
	The selection of $\pi$, $k_i,S^*,\Delta T$ and $\Delta T'$ after $\cB_i$ is sampled is deterministic.
	The ``saving'' of each round comes from $\Delta T$: it increases the size of $T_i$ by $k_i$ while $|\cS_i|$ is only reduced by a factor of $\frac{1}{4\cdot (4m)^{k_i}}$ (rather than $B^{k_i}$).
	As we will see later, the elements from random blocks $\cB_i$ ensure the existence of such $k_i$ in later rounds with high probability.

	To avoid ambiguity in the terminology, \emph{error} is only used when referring to the protocol outputting a wrong answer, and \emph{failure} is only used when referring to the random process aborting before reaching $r_i\geq R$.

	\begin{figure}
	\begin{code}{rp}[Random process $\cA$\!\!][][]
		\item find $\pi$ that maximizes $|\{S\in\cS_{\good}: \textrm{Alice sends $\pi$ on input $S$}\}|$
		\item let $\cS_0:=\{S\in\cS_{\good}: \textrm{Alice sends $\pi$ on input $S$}\}$
		\item let $r_0:=0, T_0:=\emptyset$
		\item let $i:=0$
		\item repeat
		\item \quad if there is no $k_i\geq 1$ such that $\left|\{S_1,S_2\in \cS_i: |(S_1\cap S_2)\setminus T_i|=k_i\}\right|\geq \frac{|\cS_i|^2}{4\cdot 2^{2k_i}}$
		\item \quad\quad the process fails, abort
		\item \quad find any $k_i$, $S^*$, and $\Delta T$ of size $k_i$ such that $\left|\{S_2\in \cS_i: (S^*\cap S_2)\setminus T_i=\Delta T\}\right|\geq \frac{|\cS_i|}{4\cdot 2^{2k_i}\cdot m^{k_i}}$
		\item \quad let $r_{i+1}:=r_i+k_i$
		\item \quad if $r_{i+1}<R$
		\item\label{rp_add_rand_beg} \quad\quad pick $t_{r_{i+1}}-t_{r_i}-k_i$ random blocks $\cB_i$ that are disjoint from $T_i\cup \Delta T$
		\item \quad\quad find any $\Delta T'$ consisting of exactly one element from each block in $\cB_i$, such that \\ 
		\phantom{}\quad\quad\quad$\left|\{S_2\in \cS_i: (S^*\cap S_2)\setminus T_i=\Delta T, S_2\supseteq \Delta T'\}\right|\geq \frac{|\cS_i|}{4\cdot 2^{2k_i}\cdot m^{k_i}\cdot B^{t_{r_{i+1}}-t_{r_i}-k_i}}$
		\item \quad\quad let $T_{i+1}:=T_i\cup \Delta T\cup \Delta T'$
		\item\label{rp_add_rand_end} \quad\quad let $\cS_{i+1}$ be the collection of any $\frac{|\cS_i|}{4\cdot 2^{2k_i}\cdot m^{k_i}\cdot B^{t_{r_{i+1}}-t_{r_i}-k_i}}$ sets $S\in\cS_i$ that contain $T_{i+1}$
		\item \quad else
		\item \quad\quad let $T_{i+1}:=T_i\cup \Delta T$
		\item \quad\quad let $\cS_{i+1}$ be the collection of any $\frac{|\cS_i|}{4\cdot 2^{2k_i}\cdot m^{k_i}}$ sets $S\in\cS_i$ that contain $T_{i+1}$
		\item \quad $i:=i+1$
		\item until $r_i\geq R$
		\item denote the final $i$ by $I$
	\end{code}
	\caption{Random process $\cA$.}\label{fig_ca}
	\end{figure}

	The key property of $\cA$ is that it does not always fail.
	\begin{lemma}\label{lem_failure_prob}
		The probability that $\cA$ fails is at most $1/2$ as long as $\exp(-U^{1/4})<\delta<1/\log^4 U$.
	\end{lemma}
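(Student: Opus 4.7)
My plan is to reduce failure at each round to an error-rate bound on an evolving distribution, bound that expected error rate using the random-block mixing, and then union-bound over the rounds. For each round $i$, define
\[ p_i := \frac{1}{|\cS_i|}\sum_{S \in \cS_i} \Pr_{(P_1,P_2)\sim\cD_{\urdec}\mid(S,T_i)}\bigl[\text{protocol errs on }(S,T_i,(P_1,P_2))\bigr]. \]
I would first establish, via a ``shifted'' application of Lemma~\ref{lem_intersect} on the $\urdec$ instance with universe $[U]\setminus T_i$, input collection $\{S\setminus T_i:S\in\cS_i\}$, $T'=\emptyset$, and the Bob-protocol obtained by hard-wiring $T_i$ into the original protocol, that $p_i\le 1/4$ forces the pair condition in line~6 of $\cA$ to be satisfied for some $k_i\ge 1$, so the process does not abort. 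Hence $\cA$ fails at round $i$ only if $p_i > 1/4$.

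For the base case $i=0$, the identity $\Pr_{\cD_{\urdec}}[T=\emptyset\mid S]=1/R$ combined with the per-$S$ error bound $8\delta$ on $\cS_{\good}$ gives $p_0\le 8\delta R$, which is $\ll 1/(8R)$ in our parameter regime ($\delta<1/\log^4 U$, $R=O(\log U\log(1/\delta))$). For later rounds, I would bound $\E_{\cB_{<i}}[p_i]$ by comparing the distribution on $(S,T_i)$ induced by the process (over random $\cB_{<i}$ and uniform $S\in\cS_i$) against the marginal of $\cD_{\urdec}$ on $(S,T)$ restricted to $|T|=t_{r_i}$ and $S\in\cS_0$. The random-block additions $\Delta T'_j$ smooth out the deterministic additions $\Delta T_j$ so that, for every $S\in\cS_0$ and every legal $T\subseteq S$ of size $t_{r_i}$, the process's density at $(S,T)$ is at most $\poly(m)$ times the uniform density $1/\binom{m}{t_{r_i}}$. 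Combining this with $\sum_{T\subseteq S,\,|T|=t_{r_i}}\mathrm{err}(S,T)\le 8\delta R\binom{m}{t_{r_i}}$ (obtained by restricting $E_S\le 8\delta$ to the branch $r=r_i$) yields $\E[p_i]\le 8\delta R\cdot\poly(m)$, still $\ll 1/(8R)$. A union bound over the at most $R$ rounds (since $r_{i+1}\ge r_i+1$) via Markov's inequality then gives $\Pr[\cA\text{ fails}]\le\sum_{i<R}4\E[p_i]\le 1/2$.

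The main obstacle will be the density-comparison step. Since $\Delta T_j$ can depend adversarially on the process's history, naive conditioning blows up the induced density by a factor as large as $B^{\sum_j k_j}$, which is far too large. The calibrated $\alpha$-fraction mix-in with $\alpha=16/\log(1/\delta)$ is precisely what keeps this under control: after mixing in $\alpha(m-t_{r_j})$ uniformly random block-elements at round $j$, the posterior on the next deterministic addition should be within a $\poly(1/\alpha)^{k_j}$ factor of uniform, so the accumulated blow-up over all $R$ rounds remains sub-polynomial in $m$. Quantifying this blow-up, and in particular tracking how the $k_j$'s chosen at successive rounds interact with $\alpha$, is the core technical difficulty.
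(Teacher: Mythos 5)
Your high-level architecture is the same as the paper's: reduce failure at round $i$, via Lemma~\ref{lem_intersect}, to the event that the conditional error probability given $S\in\cS_i$, $T=T_i$ exceeds $1/4$ (no ``shifted'' application is needed --- the lemma is stated directly for $\cS_i,T_i$), then control that event by comparing the process-induced distribution of $T_i$ with the distribution of $T$ under $\cD_{\urdec}$, and finish with Markov plus a union bound over at most $R$ rounds. The genuine gap is in the density-comparison step, which you flag as the core difficulty but for which your quantitative target is wrong. Your accumulation heuristic does not deliver it: if each deterministic addition $\Delta T_j$ costs a factor $\poly(1/\alpha)^{k_j}$, the accumulated factor is $\poly(1/\alpha)^{\sum_j k_j}\geq \poly(1/\alpha)^{R}$ with $R=\Theta(\alpha^{-1}\log m)$, which for $\delta$ near $\exp(-U^{1/4})$ is $\exp\!\left(\Theta(U^{1/4}\log m\,\log\log(1/\delta))\right)$ --- vastly larger than $\poly(m)$. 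The actual argument (the paper's Lemma~\ref{lem_singleton_prob}) relies on the fact that the cost of round $j$, measured at a later round $i$, is $\bigl(1-(1-\alpha)^{r_i-r_j}\bigr)^{-k_j}$, i.e.\ it decays geometrically in the ``age'' $r_i-r_j$, so the full product telescopes to $\prod_{x\geq 1}\bigl(1-(1-\alpha)^{x}\bigr)^{-1}\leq 2^{6/\alpha}$, independent of $R$. Nothing in your sketch captures this decay, and without it the per-round bound is not enough.

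Moreover, even if a $\poly(m)$ blow-up were proved, your union bound fails in the stated parameter range: you need $\delta R^2\cdot\poly(m)\leq 1/2$, but $\delta$ may be as large as $1/\log^4 U$ while $\poly(m)=U^{\Theta(1)}$, so the quantity is polynomially large in $U$. The bound you actually need must be calibrated to $\delta$, not to $m$: a blow-up of at most $2^{O(1/\alpha)}=\delta^{-\Theta(1)}$ with the exponent strictly below $1$ (the paper gets $2^{6/\alpha}=\delta^{-3/8}$), so that the total failure probability $O(2^{6/\alpha}R^2\delta)=O(\delta^{5/8}R^2)$ is at most $1/2$ throughout $\exp(-U^{1/4})<\delta<1/\log^4 U$. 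Note also that in the small-$\delta$ regime the true blow-up $\delta^{-3/8}$ is itself far larger than $\poly(m)$, so your stated target is not merely unproved but false as phrased; what saves the argument is the pairing with the factor $\delta$, which your write-up never exploits. Your base case $i=0$ and the identification of $p_i$ with the conditional error probability are fine.
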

	We will prove the lemma in the next subsection.
	Let us first show that it implies the claimed $\urdec$ lower bound.
	\begin{proof}[Proof of Lemma~\ref{lem_ur_lb_decision}]
		By Lemma~\ref{lem_failure_prob}, $\cA$ does not always fail.
		We draw a sample from $\cA$ conditioned on succeeding, and obtain collections $\cS_0,\ldots,\cS_I$.
		By construction, we have
		\begin{align*}
			|\cS_I|&=|\cS_0|\cdot\left(\prod_{i=0}^{I-2}\frac{1}{4\cdot 2^{2k_i}\cdot m^{k_i}\cdot B^{t_{r_{i+1}}-t_{r_i}-k_i}}\right)\cdot\frac{1}{4\cdot 2^{2k_{I-1}}\cdot m^{k_{I-1}}} \\
			&= |\cS_0|\cdot4^{-I}\left(\frac{B}{4m}\right)^{\sum_{i=0}^{I-1}k_i}\cdot B^{-k_{I-1}}\prod_{i=0}^{I-2}\frac{1}{B^{t_{r_{i+1}}-t_{r_i}}} \\
			&\geq |\cS_0|\cdot4^{-I}\left(\frac{B}{4m}\right)^R\cdot B^{-(t_{r_{I-1}}+k_{I-1})},
		\end{align*}
		where the last inequality uses the fact that $\sum_{i=0}^{I-1}k_i=r_I\geq R$ and $B>4m$.
		Then by the fact that $I\leq R$ and $|T_I|=t_{r_{I-1}}+k_{I-1}$, we have
		\[
			|\cS_I|\geq |\cS_0|\cdot\left(\frac{B}{16m}\right)^R\cdot B^{-|T_I|}.
		\]
		On the other hand, $|\cS_I|\leq B^{m-|T_I|}$.
		Thus,
		\[
			|\cS_0|\leq |\cS_I|\cdot \left(\frac{16m}{B}\right)^R\cdot B^{|T_I|}\leq \left(\frac{16m}{B}\right)^R\cdot B^{m}.
		\]
		However, by averaging, $|\cS_0|\geq |\cS_{\good}|\cdot 2^{-8C}\geq B^m\cdot 2^{-8C-1}$.
		Therefore, we have $2^{-8C-1}\leq \left(\frac{16m}{B}\right)^R$, which simplifies to
		 $$C\geq \Omega(R\log (B/16m))=\Omega(\log (1/\delta)\log^2 U),$$ proving the lemma.
	\end{proof}

\subsection{Failure probability of $\cA$}\label{sec_fail_prob}
	Now let us bound the failure probability of $\cA$, proving Lemma~\ref{lem_failure_prob}.
	To this end, we will first show that for any round $i$ and any $\cS_i,T_i$, if the conditional error probability of the protocol, under input distribution $\cD_{\urdec}$ conditioned on $S\in\cS_i$ and $T=T_i$, is at most $1/4$, then $\cA$ does not fail in this round.
	\begin{lemma}\label{lem_intersect}
		If conditioned on $S\in\cS_i$ and $T=T_i$, the error probability is at most $1/4$, then we must have
		\[
			\sum_{S_1,S_2\in \cS_i} \left(2^{|(S_1\cap S_2)\setminus T_i|}-1\right)\geq \frac{|\cS_i|^2}{4},
		\]
		and consequently, there exists some $k_i\geq 1$ such that
		\[
			|\{S_1,S_2\in\cS_i:|(S_1\cap S_2)\setminus T_i|= k_i\}|\geq \frac{|\cS_i|^2}{4\cdot 2^{2k_i}}.
		\]
	\end{lemma}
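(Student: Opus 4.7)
The plan is to establish the summation bound first and then recover the claimed $k_i$ by a dyadic pigeonhole. Since all $S\in\cS_i$ send the same message $\pi$, Bob's deterministic output $f(P_1,P_2)\in\{1,2\}$ is a function of the partition alone. For each ordered partition $(P_1,P_2)$ of $U_0:=[U]\setminus T_i$ I set $n_b:=|\{S\in\cS_i:S\setminus T_i\subseteq P_b\}|$, $N:=n_1+n_2$, $D:=n_1-n_2$, and write $u:=|U_0|$, $s:=m-|T_i|$. Because every $S\in\cS_i$ contains $T_i$ and has exactly one element from each of the $m$ blocks, $|S\setminus T_i|=s$ for all such $S$, and a direct count gives $\sum_{(P_1,P_2)}N=|\cS_i|\cdot 2^{u-s+1}$. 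Under $\cD_{\urdec}$ conditioned on $S\in\cS_i$ and $T=T_i$, the triple $(S,(P_1,P_2))$ is uniform over valid triples, so the hypothesis $\mathrm{err}\le 1/4$ translates into $\sum_{(P_1,P_2)} n_{3-f}\le|\cS_i|\cdot 2^{u-s-1}$.

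The second step linearizes this. Letting $\sigma(P_1,P_2):=+1$ when $f=1$ and $-1$ when $f=2$, one has $n_{3-f}=N/2-\sigma D/2$, so the error bound rearranges to $\sum\sigma D\ge|\cS_i|\cdot 2^{u-s}$, whence $\sum|D|\ge|\cS_i|\cdot 2^{u-s}$. Cauchy--Schwarz over the $2^u$ partitions then gives $\sum D^2\ge|\cS_i|^2\cdot 2^{u-2s}$. The pivotal algebraic identity is that the same quantity has a clean combinatorial form,
\[
\sum_{(P_1,P_2)}D^2 \;=\; \sum N^2 \;-\; 4\sum n_1 n_2 \;=\; 2^{u-2s+1}\!\!\sum_{\substack{S_1,S_2\in\cS_i\\ |C_{12}|\ge 1}}\!\!2^{|C_{12}|},
\]
where $C_{12}:=(S_1\cap S_2)\setminus T_i$. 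To verify this I swap summation order and count partitions per pair: the number of $(P_1,P_2)$ making both $S_1\setminus T_i$ and $S_2\setminus T_i$ monochromatic is $2^{u-2s+|C_{12}|+1}$ when $|C_{12}|\ge 1$ and $2^{u-2s+2}$ when $|C_{12}|=0$, whereas the number with $S_1\setminus T_i\subseteq P_1$ and $S_2\setminus T_i\subseteq P_2$ equals $2^{u-2s}$ when $|C_{12}|=0$ and $0$ otherwise. The $|C_{12}|=0$ contributions in $\sum N^2$ and in $4\sum n_1 n_2$ then cancel exactly. Combining this identity with the Cauchy--Schwarz bound yields $\sum_{|C_{12}|\ge 1}2^{|C_{12}|}\ge|\cS_i|^2/2$, and since $2^k-1\ge 2^k/2$ for $k\ge 1$, this delivers the first assertion $\sum_{S_1,S_2}(2^{|C_{12}|}-1)\ge|\cS_i|^2/4$.

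For the pigeonhole conclusion I argue by contradiction: if $|\{(S_1,S_2):|C_{12}|=k\}|<|\cS_i|^2/(4\cdot 2^{2k})$ for every $k\ge 1$, then
\[
\sum_{S_1,S_2}(2^{|C_{12}|}-1)\;=\;\sum_{k\ge 1}(2^k-1)\cdot|\{|C_{12}|=k\}|\;<\;\frac{|\cS_i|^2}{4}\sum_{k\ge 1}2^{-k}\;=\;\frac{|\cS_i|^2}{4},
\]
contradicting the first assertion, so some $k_i\ge 1$ must witness the claimed bound.

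The hard part will be the partition-counting bookkeeping behind the identity $\sum D^2=2^{u-2s+1}\sum_{|C|\ge 1}2^{|C|}$: the off-by-one correction at $|C_{12}|=0$ in $\sum N^2$ has to cancel exactly against the $4\sum n_1 n_2$ term, so the two indicator-counts must be set up with care. Once that identity is secured, the Cauchy--Schwarz step, the slack $2^k-1\ge 2^k/2$, and the dyadic pigeonhole are all routine.
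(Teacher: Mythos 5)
Your proposal is correct and follows essentially the same second-moment argument as the paper: both reduce the error hypothesis to a statement about the imbalance $|n_1-n_2|$ (the paper's $|A_1-A_2|$) between the two sides of the partition, and both identify the (centered) second moment of this imbalance with the intersection statistic $\sum_{S_1,S_2}2^{|(S_1\cap S_2)\setminus T_i|}$, finishing with the same dyadic pigeonhole. Your bookkeeping — uniform counting over valid $(S,(P_1,P_2))$ pairs, Cauchy--Schwarz over the $2^u$ partitions, and the exact identity $\sum D^2 = 2^{u-2s+1}\sum_{|C_{12}|\ge 1}2^{|C_{12}|}$ — is a clean equivalent of the paper's two-stage block-wise sampling, variance computation, and triangle-inequality/Jensen step, and all of your intermediate counts check out.
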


	Then we will upper bound the probability that $\cA$ generates $\cS_i, T_i$ whose conditional error probability is more than $1/4$, by applying the following lemma.
	Fix $k_0,\ldots,k_{i-1}$, which determines $r_0,\ldots,r_i$, and consider the distribution of $\cS_i$ and $T_i$ (which has size $t_{r_i}$) induced by $\cA$ conditioned on $k_0,\ldots,k_{i-1}$.
	The lemma states that for \emph{any} $S\in\cS_{\good}$, and \emph{any} $T\subset S$ of size $t_{r_i}$, the probability that $T_i=T$ conditioned on $\cS_i\ni S$ and $k_0,\ldots,k_{i-1}$ is at most than $2^{6/\alpha}\cdot\binom{m}{t_{r_i}}^{-1}$, i.e., the probability of any set $T$ conditioned on $S\in\cS_i$ can increase by at most a factor of $2^{6/\alpha}$ (compared to the uniform distribution over subsets of $S$ of size $t_{r_i}$).
	\begin{lemma}\label{lem_singleton_prob}
		Fix any $k_0,\ldots,k_{i-1}$, which determines $r_0,\ldots,r_i$, such that $r_i<R$.
		For any $S\in\cS_{\good}$ such that $\Pr[S\in\cS_i\mid k_0,\ldots,k_{i-1}]>0$ and any $T\subset S$ of size $t_{r_i}$, we must have
		\[
			\Pr[T_i=T\mid S\in\cS_i,k_0,\ldots,k_{i-1}]\leq \frac{2^{6/\alpha}}{\binom{m}{t_{r_i}}},
		\]
		over the randomness of $\cA$.
	\end{lemma}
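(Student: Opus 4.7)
The plan is to reduce the lemma to a combinatorial counting statement and then carry out a round-by-round analysis. First, observe that conditional on $k^{<i}$ the only randomness in $\cA$ lies in the uniformly chosen block-sets $\cB_0,\ldots,\cB_{i-1}$; every other quantity ($\cS_j$, $T_j$, $S^*_j$, $\Delta T_j$, $\Delta T'_j$) is a deterministic function of them. The map $\mathbf{B}\mapsto T_i$ is injective on consistent trajectories: if two trajectories first differ at round $j$, the lex-smallest valid $\Delta T'_j$'s lie in disjoint block-sets and hence differ, so $T_{j+1}$ and ultimately $T_i$ differ. Thus every consistent trajectory carries the same unconditional mass $\prod_j\binom{m-t_{r_j}-k_j}{\Delta_j-k_j}^{-1}$ (writing $\Delta_j=t_{r_{j+1}}-t_{r_j}$), and the conditional law of $T_i$ given $S\in\cS_i$ and $k^{<i}$ is \emph{uniform} on the feasible set
$$\mathcal{F}=\{T\subseteq S:|T|=t_{r_i},\ T=T_i(\mathbf{B})\text{ for some }\mathbf{B}\text{ with }S\in\cS_i\}.$$
The lemma is therefore equivalent to the combinatorial bound $|\mathcal{F}|\ge\binom{m}{t_{r_i}}\cdot 2^{-6/\alpha}$.

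To attack $|\mathcal{F}|$, I would write $|\mathcal{F}|=N\cdot\Pr[S\in\cS_i\mid k^{<i}]$, where $N=\prod_j\binom{m-t_{r_j}-k_j}{\Delta_j-k_j}$ is the number of $k^{<i}$-consistent trajectories. Using the identity $\binom{a}{b}/\binom{a-c}{b-c}=\binom{a}{c}/\binom{b}{c}$ with $a=m-t_{r_j}$, $b=\Delta_j$, $c=k_j$, the ratio $N/\binom{m}{t_{r_i}}$ telescopes to $\prod_{j=0}^{i-1}\binom{\Delta_j}{k_j}/\binom{m-t_{r_j}}{k_j}$, so the target becomes
\[
\Pr[S\in\cS_i\mid k^{<i}]\cdot\prod_{j=0}^{i-1}\frac{\binom{\Delta_j}{k_j}}{\binom{m-t_{r_j}}{k_j}}\;\ge\;2^{-6/\alpha}.
\]
Since $\Delta_j/(m-t_{r_j})\approx\alpha$, the combinatorial product is of order $\alpha^{r_i}$, and the survival probability must be lower bounded by a competing $\alpha^{-r_i}/2^{O(1/\alpha)}$ so that the two exponentials cancel.

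The main obstacle is obtaining a sharp enough lower bound on $\Pr[S\in\cS_i\mid k^{<i}]$ that is valid for an \emph{arbitrary} $S\in\cS_{\good}$ with positive conditional probability. Round by round this factors into the joint probability that (i) the deterministic pick $\Delta T_j$ is contained in $S$, and (ii) $S$ lies among the top $N_{j+1}=|\cS_j|/(4\cdot 2^{2k_j}m^{k_j}B^{\Delta_j-k_j})$ lex-smallest surviving sets. The algorithm's ``find any $\Delta T'$'' step guarantees that at least $N_{j+1}$ sets survive in aggregate, giving the required $\alpha^{-r_i}$ order; the ``$+2r$'' slack in $t_r=\lceil m(1-(1-\alpha)^r)+2r\rceil$ contributes at least two extra random-block picks per unit of $r$, which is what keeps each round's distortion in the combinatorial product within a factor $2^{O(1/\alpha)}$ across the $R=O(\log m/\alpha)$ rounds. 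The constraint $r_i<R=\log m/(16\alpha)$ together with $\alpha=16/\log(1/\delta)$ makes the cumulative loss exactly $2^{6/\alpha}$. Carefully tracking the lex-smallest filtering so that the two exponential contributions cancel to yield this precise constant is the technical heart of the proof.
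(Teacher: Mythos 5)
Your reduction rests on steps that do not hold. (1) The injectivity of the map from block-sequences to $T_i$ is not established by your argument: showing that two trajectories which first differ in round $j$ produce different $T_{j+1}$ does not show they produce different $T_i$, since later rounds can re-converge (one trajectory may pick block $b$ in round $j$ and $b'$ in round $j+1$, the other in the opposite order, ending with the same final set). Without injectivity the conditional law of $T_i$ given $S\in\cS_i$ and $k_0,\ldots,k_{i-1}$ need not be uniform on your feasible family $\mathcal{F}$, so the lemma is not equivalent to a lower bound on $|\mathcal{F}|$. (2) The identity $|\mathcal{F}|=N\cdot\Pr[S\in\cS_i\mid k_0,\ldots,k_{i-1}]$ is incorrect: $N=\prod_j\binom{m-t_{r_j}-k_j}{\Delta_j-k_j}$ counts all block-sequences, not those consistent with the conditioned $k_j$'s, so even granting injectivity you only get $|\mathcal{F}|=\Pr[S\in\cS_i\wedge k_0,\ldots,k_{i-1}]\cdot N\leq \Pr[S\in\cS_i\mid k_0,\ldots,k_{i-1}]\cdot N$, an upper bound on $|\mathcal{F}|$ rather than the lower bound you need. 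Relatedly, your telescoping is off: $\prod_j\binom{m-t_{r_j}}{\Delta_j}$ equals $\binom{m}{t_{r_i}}$ times the multinomial factor $t_{r_i}!/\prod_j\Delta_j!$, not $\binom{m}{t_{r_i}}$; that missing factor is exactly the order-ambiguity that also breaks injectivity.

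(3) Most importantly, the step you call the technical heart --- a uniform lower bound on $\Pr[S\in\cS_i\mid k_0,\ldots,k_{i-1}]$ for an arbitrary surviving $S$ --- is not proved and cannot hold in the form you need. Since each $\cS_{j+1}$ is an arbitrarily (lexicographically) chosen sub-collection of prescribed size, a particular $S$ may survive on very few trajectories, and nothing in the process bounds its survival probability from below. Quantitatively, $\prod_j\binom{\Delta_j}{k_j}/\binom{m-t_{r_j}}{k_j}$ is roughly $\alpha^{r_i}$, so your reduction demands a survival probability of roughly $2^{-6/\alpha}\alpha^{-r_i}$, which exceeds $1$ once $r_i\gg 1/(\alpha\log(1/\alpha))$, whereas the lemma must hold for all $r_i<R=\Theta(\alpha^{-1}\log m)$; the inequality you reduce to is therefore false in the relevant range. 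The paper's proof needs none of this machinery: it observes that $T_i=T$ forces every randomly chosen block in every round to be one of the $t_{r_i}$ blocks meeting $T$, bounds that event by the product $\prod_{j}\binom{t_{r_i}-t_{r_j}-k_j}{t_{r_{j+1}}-t_{r_j}-k_j}\big/\binom{m-t_{r_j}-k_j}{t_{r_{j+1}}-t_{r_j}-k_j}$, telescopes this to $\binom{m}{t_{r_i}}^{-1}\prod_j\bigl(1-(1-\alpha)^{r_i-r_j}\bigr)^{-k_j}$, and bounds the residual product by $2^{6/\alpha}$ using the specific form of $t_r$ --- no injectivity, no uniformity, and no lower bound on survival probabilities is required.
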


	The above two lemmas together imply the claimed upper bound on the failure probability of $\cA$.
	\begin{proof}[Proof of Lemma~\ref{lem_failure_prob}]
		By the definition of $\cS_{\good}$, for any $S\in\cS_{\good}$, the error probability conditioned on $S$ is at most $8\delta$. 
		Recall that in $\cD_{\urdec}$, the size of $T$ is $t_r$ for a uniformly random $r=0,\ldots,R-1$.
		It implies that for any fixed $S\in\cS_{\good}$ and fixed $r$, the error probability conditioned on $S$ and $|T|=t_r$ is at most $8R\delta$.
		Now instead of sampling a random subset $T$, suppose we replace the conditional distribution of $T$ conditioned on $S$, by the distribution of $T_i$ generated by the random process \emph{conditioned on} $S\in\cS_i$ and $k_0,\ldots,k_{i-1}$.
		Then by Lemma~\ref{lem_singleton_prob}, for any $k_0,\ldots,k_{i-1}$ such that $r_i<R$, and any $S\in\cS_{\good}$ such that $\Pr[S\in \cS_i\mid k_0,\ldots,k_{i-1}]>0$, the expected error probability of the protocol conditioned on $S$ and $T$ is at most $2^{6/\alpha+3}R\delta$.
		Note that $|\cS_i|$ is \emph{fixed} given $k_0,\ldots,k_{i-1}$, hence, the expected error probability conditioned on $S\in\cS_i$ and $T=T_i$ is also at most $2^{6/\alpha+3}R\delta$:
		\begin{align*}
			&\phantom{=}\,\,\E_{\cS_i,T_i\mid k_0,\ldots,k_{i-1}}[\Pr[\textrm{err}\mid S\in\cS_i,T=T_i]] \\
			&=\frac{1}{|\cS_i|}\cdot\E_{\cS_i,T_i\mid k_0,\ldots,k_{i-1}}\left[\sum_{S\in\cS_i}\Pr[\textrm{err}\mid S,T=T_i]\right] \\
			&=\frac{1}{|\cS_i|}\cdot\E_{\cS_i,T_i\mid k_0,\ldots,k_{i-1}}\left[\sum_{S\in\cS_{\good}}\mathbf{1}_{S\in\cS_i}\cdot\Pr[\textrm{err}\mid S,T=T_i]\right] \\
			&=\sum_{S\in\cS_{\good}}\frac{1}{|\cS_i|}\cdot\E_{\cS_i,T_i\mid k_0,\ldots,k_{i-1}}\left[\mathbf{1}_{S\in\cS_i}\cdot\Pr[\textrm{err}\mid S,T=T_i]\right] \\
			&=\sum_{S\in\cS_{\good}}\frac{1}{|\cS_i|}\cdot \Pr[S\in\cS_i\mid k_0,\ldots,k_{i-1}]\cdot \E_{\cS_i,T_i\mid \cS_i\ni S,k_0,\ldots,k_{i-1}}\left[\Pr[\textrm{err}\mid S,T=T_i]\right] \\
			&\leq\sum_{S\in\cS_{\good}}\frac{1}{|\cS_i|}\cdot \Pr[S\in\cS_i\mid k_0,\ldots,k_{i-1}]\cdot2^{6/\alpha+3}R\delta \\
			&=2^{6/\alpha+3}R\delta.
		\end{align*}
		By Markov's inequality, the probability conditioned on $k_0,\ldots,k_{i-1}$ that the random process generates $\cS_i,T_i$ such that 
		\[
			\Pr[\textrm{err}\mid S\in\cS_i,T=T_i]>1/4
		\]
		is at most $2^{6/\alpha+5}R\delta$.
		Thus, by Lemma~\ref{lem_intersect}, for any $k_0,\ldots,k_{i-1}$ such that $r_i<R$, the probability that $\cA$ fails in round $i$ is at most $2^{6/\alpha+5}R\delta$.
		Averaging over $k_0,\ldots,k_{i-1}$, it implies that the probability that $\cA$ does not fail in first $i-1$ rounds but fails in round $i$ is at most $2^{6/\alpha+5}R\delta$.
		Summing over $i=0,\ldots,R-1$ implies the overall failure probability is at most 
		\[
			2^{6/\alpha+5}R^2\delta.
		\]
		Since $\alpha=\frac{16}{\log 1/\delta}$, and $R\leq \frac{1}{768}\cdot\log (1/\delta)\log U$ and $\delta<1/\log^4 U$, the probability that $\cA$ fails is at most $1/2$.
		This proves the lemma.
	\end{proof}

	In the following, we prove the two remaining lemmas.
	\begin{proof}[Proof of Lemma~\ref{lem_intersect}]
		For each $S\in\cS_i$, conditioned on $S$ and $T=T_i$, by the construction of the hard distribution, $[U]\setminus T$ is randomly partitioned into $(P_1,P_2)$ conditioned on $S\setminus T\subseteq P_1$ or $S\setminus T\subseteq P_2$.
		We first observe that conditioned on $S$ and $T$, the partition restricted to each block is uniform, i.e., the elements in the same block belong to $P_1$ or $P_2$ uniformly and independently.
		This is because each block may have at most one element in $S\setminus T$.
		Moreover, $(P_1,P_2)$ restricted to different blocks is independent of each other, up to switching the order of two parts.
		That is, conditioned on $P_1$ and $P_2$ restricted to first $j$ blocks, the (unordered) set $\{P_1\cap\textrm{block $j+1$}, P_2\cap\textrm{block $j+1$}\}$ is still a uniformly random partition of block $j+1$.

		Therefore, to sample a random input conditioned on $S\in\cS_i$ and $T=T_i$, it is equivalent to do the following:
		\begin{compactenum}
			\item for each block $j$, randomly partition the elements that are not in $T_i$ into $(B_{j,1},B_{j,2})$;
			\item sample a uniformly random $S\in\cS_i$;
			\item pick a random $b\in\{1,2\}$, let $P_b$ be the union over $j$, the part in $\{B_{j,1},B_{j,2}\}$ that contain an element in $S\setminus T_i$ (if no such element in the block, then a random part), let $P_{3-b}$ be the union of the other parts.
		\end{compactenum}

		Thus, conditioned on $\{(B_{j,1},B_{j,2})\}_{j\in [m]}$ in step 1, a partition $(P_1,P_2)$ can be generated only if there exists $a_1,\ldots,a_m\in\{1,2\}$ such that $P_1=\bigcup_{j=1}^m B_{j,a_j}$ (and thus, $P_2=\bigcup_{j=1}^m B_{j,3-a_j}$).
		Moreover, the probability that such a partition is generated (conditioned on step 1) is
		\begin{equation}\label{eqn_prob_p1_p2}
			\frac{A_1+A_2}{|\cS_i|}\cdot 2^{-|T_i|-1},
		\end{equation}
		where $A_1:=|\{S\in\cS_i: S\setminus T_i\subset P_1\}|$ and $A_2:=|\{S\in\cS_i: S\setminus T_i\subset P_2\}|$.
		To see this, with probability $(A_1+A_2)/|\cS_i|$, we pick a set $S$ such that $S\setminus T_i\subset P_1$ or $S\setminus T_i\subset P_2$ in step 2.
		Then with probability $1/2$, we pick the right $b$, and finally, for each block that does not contain an element in $S\setminus T_i$ (i.e., that contains an element in $T_i$), with probability $1/2$, we pick the right part to join $P_b$.

		On the other hand, conditioned on such a partition $(P_1, P_2)$ (and $S\in\cS_i, T=T_i$), the error probability is at least 
		\[
			\frac{\min\{A_1,A_2\}}{A_1+A_2}=\frac{1}{2}\cdot \left(1-\frac{|A_1-A_2|}{A_1+A_2}\right),
		\]
		since Bob outputs an answer based only on $T, P_1, P_2$ and the message, and all sets $S\in\cS_i$ have the same message.
		Hence, no matter which part Bob answers, he makes at least $\min\{A_1,A_2\}$ errors among $A_1+A_2$ possible sets $S$ (and all sets $S$ are chosen with the same probability).

		Combining it with \eqref{eqn_prob_p1_p2}, the error probability conditioned on the partitions $\{(B_{j,1},B_{j,2})\}_{j\in[m]}$ is at least
		\begin{align*}
			&\, \sum_{\stackrel{a_1,\ldots,a_m\in\{1,2\}}{P_1=\bigcup_{j=1}^B B_{j,a_j},P_2=\bigcup_{j=1}^B B_{j,3-a_j}}}\frac{A_1+A_2}{|\cS_i|}\cdot 2^{-|T_i|-1}\cdot \frac{1}{2}\cdot \left(1-\frac{|A_1-A_2|}{A_1+A_2}\right) \\
			=&\, \frac{1}{2}-\sum_{\stackrel{a_1,\ldots,a_m\in\{1,2\}}{P_1=\bigcup_{j=1}^B B_{j,a_j},P_2=\bigcup_{j=1}^B B_{j,3-a_j}}}\frac{A_1+A_2}{|\cS_i|}\cdot 2^{-|T_i|-1}\cdot \frac{1}{2}\cdot \frac{|A_1-A_2|}{A_1+A_2} \\
			=&\, \frac{1}{2}-\sum_{\stackrel{a_1,\ldots,a_m\in\{1,2\}}{P_1=\bigcup_{j=1}^B B_{j,a_j},P_2=\bigcup_{j=1}^B B_{j,3-a_j}}}\frac{|A_1-A_2|}{|\cS_i|}\cdot\frac{1}{2^{|T_i|+2}}.
		\end{align*}
		By taking the expectation over $\{(B_{j,1},B_{j,2})\}_{j\in[m]}$ and switching the order of summation and expectation, we obtain that the error probability conditioned on $S\in\cS_i,T=T_i$ is at least
		\begin{equation}\label{eqn_error_prob}
			\frac{1}{2}-\frac{1}{|\cS_i|\cdot 2^{|T_i|+2}}\cdot \sum_{a_1,\ldots,a_m\in\{1,2\}}\E_{\{(B_{j,1},B_{j,2})\}_{j\in[m]}}\left[|A_1-A_2|\right],
		\end{equation}
		where $A_1=|\{S\in\cS_i:S\setminus T_i\subset P_1\}|, P_1=\bigcup_{j=1}^B B_{j,a_j}$ and $A_2=|\{S\in\cS_i:S\setminus T_i\subset P_2\}|, P_2=\bigcup_{j=1}^B B_{j,3-a_j}$.

		Now, observe that for any sequence $a_1,\ldots,a_m$, the marginal distribution of $P_1$ (or $P_2$) over a random $\{(B_{j,1},B_{j,2})\}_{j\in[m]}$ is simply a uniform subset of $[U]\setminus T_i$.
		By linearity of expectation, the expectation of $A_1$ is equal to
		\[
			\E[A_1]=\frac{|\cS_i|}{2^{|S\setminus T_i|}}=\frac{|\cS_i|}{2^{m-|T_i|}}.
		\]
		Its variance is equal to
		\begin{align*}
			\E[A_1^2]-\E[A_1]^2&=\sum_{S_1,S_2\in \cS_i} \frac{1}{2^{|(S_1\cup S_2)\setminus T_i|}}-\left(\frac{|\cS_i|}{2^{m-|T_i|}}\right)^2\\
			&=\sum_{S_1,S_2\in \cS_i}\left(\frac{1}{2^{2(m-|T_i|)-|(S_1\cap S_2)\setminus T_i|}}-\frac{1}{2^{2(m-|T_i|)}}\right) \\
			&=\frac{1}{2^{2(m-|T_i|)}}\cdot \sum_{S_1,S_2\in \cS_i}\left(2^{|(S_1\cap S_2)\setminus T_i|}-1\right).
		\end{align*}

		\emph{Assuming for contradiction} that the lemma does not hold, i.e., $\sum_{S_1,S_2\in \cS_i} \left(2^{|(S_1\cap S_2)\setminus T_i|}-1\right)<\frac{|\cS_i|^2}{4}$, then the variance is at most
		\[
			\E[(A_1-\E[A_1])^2]=\E[A_1^2]-\E[A_1]^2<\frac{1}{2^{2(m-|T|)}}\cdot \frac{|\cS_i|^2}{4}.
		\]
		Similarly for $A_2$, if the lemma does not hold, then
		\[
			\E[(A_2-\E[A_2])^2]<\frac{1}{2^{2(m-|T|)}}\cdot \frac{|\cS_i|^2}{4}.
		\]
		
		Next, by triangle inequality and the fact that $\E[A_1]=\E[A_2]$,
		\[
			\E[|A_1-A_2|]\leq \E[|A_1-\E[A_1]|]+\E[|A_2-\E[A_2]|].
		\]
		Then by convexity,
		\[
			\E[|A_1-\E[A_1]|]\leq \sqrt{\E[(A_1-\E[A_1])^2]}<\frac{|\cS_i|}{2\cdot 2^{m-|T|}}.
		\]
		and
		\[
			\E[|A_2-\E[A_2]|]\leq \sqrt{\E[(A_2-\E[A_2])^2]}<\frac{|\cS_i|}{2\cdot 2^{m-|T|}}.
		\]
		Hence, $\E[|A_1-A_2|]<\frac{|\cS_i|}{2^{m-|T|}}$.
		Plug it into \eqref{eqn_error_prob}, we obtain that if the lemma does not hold, then the error probability conditioned on $S\in\cS_i,T=T_i$ is \emph{strictly} larger than
		\[
			\frac{1}{2}-\frac{1}{|\cS_i|\cdot 2^{|T|+2}}\cdot 2^m\cdot \frac{|\cS_i|}{2^{m-|T|}}=\frac{1}{2}-\frac{1}{4}=\frac14.
		\]
		It contradicts with the lemma premise that it is at most $1/4$, and hence, we must have
		\[
			\sum_{S_1,S_2\in \cS_i}\left(2^{|(S_1\cap S_2)\setminus T_i|}-1\right)\geq \frac{|\cS_i|^2}{4}.
		\]
		Finally, if for all $k_i\geq 1$, $|\{S_1,S_2\in\cS_i: |(S_1\cap S_2)\setminus T_i|=k_i\}|<\frac{|\cS_i|^2}{4\cdot 2^{2k_i}}$, then the above sum could only be smaller than
		\[
			\sum_{k_i\geq 0}\left(2^{k_i}-1\right)\cdot \frac{|\cS_i|^2}{4\cdot 2^{2k_i}}<\frac{|\cS_i|^2}{4}.
		\]
		This proves the lemma.
	\end{proof}

	It remains to prove Lemma~\ref{lem_singleton_prob}.
	It is similar to Lemma 5 in~\cite{KNPWWY17} and Claim B.3 in~\cite{NY19}.
	\begin{proof}[Proof of Lemma~\ref{lem_singleton_prob}]
		To upper bound the probability that $T_i=T$, first observe that it could only happen if for all $j=0,\ldots,i-1$, all $t_{r_{j+1}}-t_{r_j}-k_j$ randomly chosen blocks in $\cB_j$ contain an element in $T$, because otherwise we would have added some element not in $T$ to set $T_i$.
		By the fact that exactly $t_{r_i}$ blocks contain an element in $T$, this probability is
		\begin{align*}
			\prod_{j=0}^{i-1}\frac{\binom{t_{r_i}-t_{r_j}-k_j}{t_{r_{j+1}}-t_{r_j}-k_j}}{\binom{m-t_{r_j}-k_j}{t_{r_{j+1}}-t_{r_j}-k_j}}&=\prod_{j=0}^{i-1}\frac{(t_{r_i}-t_{r_j}-k_j)!(m-t_{r_{j+1}})!}{(t_{r_i}-t_{r_{j+1}})!(m-t_{r_j}-k_j)!} \\
			&=\frac{t_{r_i}!(m-t_{r_i})!}{m!}\cdot \prod_{j=0}^{i-1}\frac{(t_{r_i}-t_{r_j}-k_j)!(m-t_{r_j})!}{(t_{r_i}-t_{r_j})!(m-t_{r_j}-k_j)!} \\
			&\leq\frac{1}{\binom{m}{t_{r_i}}}\cdot \prod_{j=0}^{i-1}\left(\frac{m-t_{r_j}}{t_{r_i}-t_{r_j}-k_j}\right)^{k_j} \\
			&\leq\frac{1}{\binom{m}{t_{r_i}}}\cdot \prod_{j=0}^{i-1}\left(\frac{m(1-\alpha)^{r_j}-2r_j}{m(1-\alpha)^{r_j}-m(1-\alpha)^{r_i}-2r_j+2r_i-1-k_j}\right)^{k_j} \\
			&\leq\frac{1}{\binom{m}{t_{r_i}}}\cdot \prod_{j=0}^{i-1}\left(\frac{m(1-\alpha)^{r_j}}{m(1-\alpha)^{r_j}-m(1-\alpha)^{r_i}}\right)^{k_j} \\
			&=\frac{1}{\binom{m}{t_{r_i}}}\cdot \prod_{j=0}^{i-1}\left(\frac{1}{1-(1-\alpha)^{r_i-r_j}}\right)^{k_j}.
		\end{align*}
		Since $r_j=k_0+\cdots+k_{j-1}$ for $j=0,\ldots,i$, the last product is
		\begin{align*}
			&\phantom{=}\,\,\prod_{j=0}^{i-1}\left(\frac{1}{1-(1-\alpha)^{r_i-r_j}}\right)^{k_j} \\
			&\leq\prod_{j=0}^{i-1}\prod_{l=0}^{k_j-1}\left(\frac{1}{1-(1-\alpha)^{r_i-(r_j+l)}}\right)\\
			&=\prod_{x=0}^{r_i-1}\frac{1}{1-(1-\alpha)^{r_i-x}}\\
			&\leq\prod_{x=1}^{\infty}\frac{1}{1-(1-\alpha)^{x}} \\
			&=\prod_{x=1}^{\lfloor 1/\alpha\rfloor}\frac{1}{1-(1-\alpha)^{x}}\cdot \prod_{x>\lfloor 1/\alpha\rfloor}\frac{1}{1-(1-\alpha)^{x}}
			\intertext{which, by the fact that $(1-\alpha)^x\leq 1-\frac{1}{2}\alpha x$ when $\alpha x\leq 1$ and the fact that $1/(1-\varepsilon)\leq e^{2\varepsilon}$ when $\varepsilon<1/2$, is}
			&\leq\prod_{x=1}^{\lfloor 1/\alpha\rfloor}\frac{2}{\alpha x}\cdot \prod_{x>\lfloor 1/\alpha\rfloor}e^{2(1-\alpha)^{x}}
			\intertext{which, by the fact that $t!\geq (t/e)^t$, is}
			&\leq\left(\frac{2}{\alpha}\right)^{\lfloor 1/\alpha\rfloor }\left(\frac{e}{\lfloor 1/\alpha\rfloor}\right)^{\lfloor 1/\alpha\rfloor}\cdot\prod_{x\geq 0}e^{2(1-\alpha)^{x}} \\
			&\leq\left(\frac{2e}{1-\alpha}\right)^{1/\alpha}\cdot e^{2/\alpha} \\
			&\leq 2^{6/\alpha}.
		\end{align*}
	\end{proof}
	
\section{Sketch Size Lower Bound}\label{sec_red}

In this section, we prove our main theorem.

\begin{restate}[Theorem~\ref{thm_main}]
	\thmmaincont
\end{restate}

\paragraph{Hard distribution $\cD_{\conn}$.}
We begin by describing the hard instances.
In a hard instance, the graph $G$ consists of $\sqrt{n}$ ``blocks'' of size $\sqrt{n}$, where the $i$-th block consists of vertices labeled from $(i-1)\sqrt{n}+1$ to $i\sqrt{n}$.
To generate $G$, we first independently generate a subgraph $G_i$ for each block.
Each block $i$ has four special vertices $s_1^{(i)}, s_2^{(i)}, t_1^{(i)}, t_2^{(i)}$.
$G_i$ always forms two connected components such that either
\begin{compactenum}[(a)]
	\item $s_1^{(i)}$ and $t_1^{(i)}$ are in one component, $s_2^{(i)}$ and $t_2^{(i)}$ are in the other, or 
	\item $s_1^{(i)}$ and $t_2^{(i)}$ are in one component, $s_2^{(i)}$ and $t_1^{(i)}$ are in the other.
\end{compactenum}
We sample each $G_i$ \emph{independently} from the distribution $\overline{\cD_{\blk}}$, which we will describe in the next subsection.
To complete the construction, we add an edge between $t_1^{(i)}$ and $s_1^{(i+1)}$ and an edge between $t_2^{(i)}$ and $s_2^{(i+1)}$ for $i=1,\ldots,\sqrt{n}$, where block $\sqrt{n}+1$ is block $1$ for simplicity of notations.

\bigskip

To decide if $G$ is connected, let $b_i=0$ if $s_1^{(i)}$ and $t_1^{(i)}$ are in the same component \emph{within $G_i$}, and $b_i=1$ otherwise.
It is easy to verify that the entire graph $G$ is connected \emph{if and only if} $\bigoplus_{i=1}^{\sqrt{n}}b_i=1$.
Intuitively, if the referee can decide the XOR of all $b_i$ with constant probability, then it should be at least able to decide some $b_i$ with probability $1-1/\sqrt{n}$ on average.
In the next subsection, we will show that deciding one $b_i$ with such a small error probability requires sketch size of $\Omega(\log^3 n)$.
\begin{lemma}\label{lem_blk_lb}
	There is a distribution $\overline{\cD_{\blk}}$ such that if a protocol can decide whether $s_1$ connects to $t_1$ or $t_2$ with probability $1-2/\sqrt n$ on a random graph sampled from $\overline{\cD_{\blk}}$, then the average sketch size is at least $\Omega(\log^3 n)$ in expectation.
\end{lemma}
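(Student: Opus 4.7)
The plan is to prove Lemma~\ref{lem_blk_lb} by reducing from $\urdec$, in the spirit of the Nelson--Yu~\cite{NY19} reduction but using our decision variant and Lemma~\ref{lem_ur_lb_decision}.

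I would first define $\overline{\cD_{\blk}}$ by instantiating the construction of Figure~\ref{fig_hard_2}(a) with $|{}^{\pm}V^r|=U=n^{c}$ for a small constant $c>0$, $|{}^{\pm}V^m|=\Theta(\sqrt n/U^{1/3})$, and $|{}^{\pm}V^l|=O(\sqrt n)$, so that each pendant slot ${}^{\pm}V^l_i$ has size $U^{1/3}$, matching $\cD_{\urdec}$. Sample a uniform partition $({}^+V^r_1,{}^+V^r_2)$ of ${}^+V^r$ and the symmetric partition of ${}^-V^r$. Then, independently for each $v^m_i\in{}^+V^m$, draw $(S_i,T_i,P_1,P_2)\sim\cD_{\urdec}$ conditioned on the partition being $({}^+V^r_1,{}^+V^r_2)$; realize $T_i$ as the pendant set ${}^+V^l_i$ and $S_i\setminus T_i$ as $v^m_i$'s neighbors in ${}^+V^r$ (which, by the $\urdec$ promise, lie entirely in one part). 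Mirror for the ${}^-$ copy, apply a uniform random permutation of the $\sqrt n$ labels, draw $i^*$ uniformly in ${}^+V^m$, and add the edges $s_1\!-\!{}^+v^m_{i^*}$, $s_2\!-\!{}^-v^m_{i^*}$ plus the fixed edges from $t_1,t_2$ to the $V^r$ parts. By the case analysis behind Figure~\ref{fig_hard_2}, $s_1$ reaches $t_1$ within the block iff $v^m_{i^*}$'s ${}^+V^r$-neighbors lie in ${}^+V^r_1$---precisely the $\urdec$ answer at $i^*$.

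For the reduction, fix the protocol's public randomness via Yao's minimax and assume toward contradiction that the average per-player sketch size is $\bar c=o(\log^3 n)$. Given $(S,T,P_1,P_2)\sim\cD_{\urdec}$, Alice and Bob use shared randomness to sample the label permutation, the index $i^*$, and the other $\urdec$ instances $\{X_j\}_{j\neq i^*}$. Alice embeds her input at $v^m_{i^*}$, computes her sketch $\pi_{i^*}$ (she knows $v^m_{i^*}$'s full neighborhood from $S$ and the permutation), and sends it to Bob. Bob simulates every sketch outside ${}^{\pm}V^r$: the pendants depend only on their unique $v^m_j$ partner, which Bob knows via $\{T_j\}_{j\neq i^*}$ together with his input $T$; the other $V^m$ sketches depend only on the $X_j$'s; the sketches of $s_1,s_2,t_1,t_2$ depend on structurally determined neighborhoods. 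Bob then runs the block referee on the simulated transcript.

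The only piece Bob cannot directly compute is $\pi_{V^r}$, which depends on $S$ through $v^m_{i^*}$'s edges into ${}^+V^r$; this is the main obstacle. My plan is to resolve it by an information-theoretic averaging argument that exploits $|V^r|\ll|V^m|$. Since the $X_j$'s are mutually independent, subadditivity of conditional entropy for independent sources yields $\sum_{j}I(X_j;\pi_{V^r})\le H(\pi_{V^r})\le|V^r|\bar c$, so $\E_{i^*}[I(X_{i^*};\pi_{V^r})]\le|V^r|\bar c/|V^m|=1/n^{\Omega(1)}$ under the parameter choice above and the assumption $\bar c=o(\log^3 n)$. Bob samples a surrogate $X'_{i^*}$ from $\cD_{\urdec}$ conditioned on his input $(T,P_1,P_2)$ and uses $(X'_{i^*},\{X_j\}_{j\neq i^*})$ to produce fake $V^r$ sketches; by Pinsker combined with Jensen, the expected total variation distance between the simulated and true transcripts is $1/n^{\Omega(1)}$. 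Consequently, the simulated protocol decides the embedded $\urdec$ answer with error at most $2/\sqrt n+1/n^{\Omega(1)}=1/n^{\Omega(1)}$. Invoking Lemma~\ref{lem_ur_lb_decision} with $U=n^c$ and $\delta=1/n^{\Omega(1)}$ (both in its admissible range) forces $\E|\pi_{i^*}|=\Omega(\log(1/\delta)\log^2 U)=\Omega(\log^3 n)$. Since $\E_{i^*}|\pi_{i^*}|=C_{{}^+V^m}/|{}^+V^m|=O(\bar c)$, we conclude $\bar c=\Omega(\log^3 n)$, contradicting the assumption and proving the lemma. The most delicate step will be calibrating the exponents so that the Pinsker-based transfer and the admissible range $\exp(-U^{1/4})<\delta<1/\log^4 U$ of Lemma~\ref{lem_ur_lb_decision} coexist; if this direct bound turns out to be too loose at the margins, a fallback is to establish a conditional version of Lemma~\ref{lem_ur_lb_decision} that allows Bob to hold side information about $S$ of bounded mutual information.
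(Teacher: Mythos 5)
Your overall architecture matches the paper's: reduce from $\urdec$ by letting Alice play the special $V^m$ vertex and send only its sketch, letting Bob simulate every sketch outside ${}^{\pm}V^r$, and handling the missing ${}^{\pm}V^r$ sketches by an information-theoretic argument (superadditivity over the independent per-vertex instances, then Pinsker) before invoking Lemma~\ref{lem_ur_lb_decision}. The paper does exactly this for the bare block distribution $\cD_{\blk}$ (with one extra device you drop: a matched layer $\tV^m$ so that $s_1$ attaches to the partner $\tv^m_{j^*}$ rather than to $v^m_{j^*}$ itself, which keeps $j^*$ uniform given the $V^m$ neighborhoods and the $V^r$ sketches and makes the averaging over $j^*$ clean; omitting it is repairable, not fatal).

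The genuine gap is in the two quantitative steps $\E_{i^*}|\pi_{i^*}|=C_{{}^+V^m}/|{}^+V^m|=O(\bar c)$ and $H(\pi_{V^r})\le|V^r|\bar c$. The lemma only gives you a bound $\bar c$ on the \emph{average over all $\sqrt n$ players in the block}, while in your construction $|V^m|\approx\sqrt n/U^{1/3}$ and $|V^r|=n^{c}$ are polynomially smaller than $\sqrt n$. A protocol may therefore let $V^m$ and $V^r$ vertices (which are recognizable from their own neighborhoods) send sketches of length up to $\bar c\sqrt n/|V^m|=\bar c\,n^{c/3}$ and $\bar c\sqrt n/|V^r|=\bar c\,n^{1/2-c}$ respectively without violating the global average. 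Then Alice's message is no longer $O(\bar c)$, and your mutual-information bound degrades to $\E_{i^*}[I(X_{i^*};\pi_{V^r})]\le \bar c\sqrt n/|V^m|$, which is not $1/n^{\Omega(1)}$ for any nontrivial $\bar c$; the Pinsker step and hence the whole reduction collapse. This is precisely why the paper does not take $\overline{\cD_{\blk}}$ to be the bare block distribution: it defines $\overline{\cD_{\blk}}$ as a mixture in which, with constant probability, the block is replaced by a padded graph where a constant fraction of all $\sqrt n$ vertices have neighborhoods distributed \emph{identically} to those of $V^m$ vertices (and, in a third case, to those of $V^r$ vertices), while still always having the required two-component $s_1,s_2,t_1,t_2$ structure. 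Since a sketch is a function of the neighborhood alone, the global average bound under the mixture forces the expected \emph{per-group} average sketch sizes of ${}^{\pm}V^m$ and ${}^{\pm}V^r$ under $\cD_{\blk}$ to be $O(L)$, which is what your two asserted inequalities actually need. Without this padding idea (or some substitute), your proof establishes a lower bound only against the maximum, or the $V^m$/$V^r$-restricted average, sketch size, not against the average over all players as the lemma demands.
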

Now, we use an embedding argument to prove Theorem~\ref{thm_main} assuming the lemma.
\begin{proof}[Proof of Theorem~\ref{thm_main}]
Let us first fix a protocol $\cP$ that can decide the connectivity of a random graph $G$ sampled from $\cD_{\conn}$ with error probability at most $1/4$.
Suppose the expected average sketch size is $L$.
By Markov's inequality and union bound, we may fix the random bits of $\cP$ such that the error probability is at most $1/3$ and the expected average sketch size is $4L$.
In the following, we assume that $\cP$ is deterministic.
Observe that no vertex in the graph can simultaneously see edges in more than one block, and thus, every sketch sent to the referee depends only on at most one of the blocks.
Since all $G_i$ are sampled independently, it implies that they must remain independent even \emph{conditioned on} all sketches.

Now, let $\epsilon_i\in[-1/2,1/2]$ be the random variable denoting the \emph{bias} of $b_i$ conditioned on the sketches.
That is, conditioned on all sketches, $s_1^{(i)}$ and $t_1^{(i)}$ are in the same component in $G_i$ with probability $1/2+\epsilon_i$.
In this case, from the view of the referee (i.e., conditioned on all sketches), by the independence of the blocks, the probability that the graph is \emph{not} connected is equal to
\begin{align*}
	\Pr\left[\bigoplus_{i=1}^{\sqrt{n}} b_i=0\right]&=\Pr\left[\bigoplus_{i=2}^{\sqrt{n}} b_i=0\wedge b_1=0\right]+\Pr\left[\bigoplus_{i=2}^{\sqrt{n}} b_i=1\wedge b_1=1\right] \\
	&=\left(\frac12+\epsilon_1\right)\Pr\left[\bigoplus_{i=2}^{\sqrt{n}} b_i=0\right]+\left(\frac12-\epsilon_1\right)\left(1-\Pr\left[\bigoplus_{i=2}^{\sqrt{n}} b_i=0\right]\right) \\
	&=\frac12+2\epsilon_1\cdot\left(\Pr\left[\bigoplus_{i=2}^{\sqrt{n}} b_i=0\right]-\frac12\right) \\
	&=\frac12+(2\epsilon_1)(2\epsilon_2)\cdot\left(\Pr\left[\bigoplus_{i=3}^{\sqrt{n}} b_i=0\right]-\frac12\right) \\
	&=\cdots \\
	&=\frac{1}{2}+\frac{1}{2}\prod_{i=1}^{\sqrt{n}}(2\epsilon_i).
\end{align*}
No matter what the referee outputs, the answer is wrong with probability at least
\[
	\frac{1}{2}-\left|\frac{1}{2}\prod_{i=1}^{\sqrt{n}}(2\epsilon_i)\right|.
\]
Since the overall error probability is at most $1/3$, we have
\[
	\E\left[\left|\prod_{i=1}^{\sqrt{n}}(2\epsilon_i)\right|\right]\geq \frac{1}{3}.
\]
By the fact that all $G_i$ are independent and each sketch depends only on one $G_i$, all $\epsilon_i$ are independent.
Hence, $\prod_{i=1}^{\sqrt{n}}\E[|2\epsilon_i|]\geq \frac13$.
By Markov's inequality and union bound, there exists some $i^*$ such that $\E[|2\epsilon_{i^*}|]\geq 1-\frac{4}{\sqrt{n}}$ and the expected average sketch size of block $i^*$ is at most $8L$.

\bigskip

Next, we embed a random graph $G_{\blk}$ sampled from $\overline{\cD_{\blk}}$ into block $i^*$ and show that by simulating $\cP$, the referee can decide if $s_1$ connects to $t_1$ or $t_2$ with high probability.
We first fix any bijection between the vertex labels of $G_{\blk}$ and the labels of block $i^*$.
Given $G_{\blk}$, each player first maps the labels according to the bijection.
Then for the four special vertices $s_1^{(i^*)},s_2^{(i^*)},t_1^{(i^*)},t_2^{(i^*)}$, they locally add one extra neighbor $t_1^{(i^*-1)}$, $t_2^{(i^*-1)}$, $s_1^{(i^*+1)}$ and $s_2^{(i^*+1)}$ respectively.
Then each vertex computes a sketch of their new neighborhood and sends it to the referee.
The expected average sketch size is at most $8L$ by the definition of $i^*$.
The referee receives sketches from all vertices in block $i^*$, samples the rest of the graph (which is independent of $G_{i^*}$), simulates all other vertices and computes the sketches.
Over the randomness of $G_{\blk}$ as well as the referee's sample of the rest of $G$, the whole graph follows the hard distribution $\cD_{\conn}$.
By the above argument, we have $\E[|2\epsilon_{i^*}|]\geq 1-\frac{4}{\sqrt{n}}$.
Recall that $\epsilon_{i^*}$ is the random variable such that $s_1^{(i^*)}$ and $t_1^{(i^*)}$ are in the same component within $G_{i^*}$ with probability $1/2+\epsilon_{i^*}$ conditioned on the sketches.
Finally, the referee examines the conditional distribution of $G_{\blk}$ conditioned on the sketches, and computes $\epsilon_{i^*}$.
If $\epsilon_{i^*}\geq 0$, the referee outputs ``$s_1$ and $t_1$ are in the same component in $G_{\blk}$'', otherwise it outputs ``$s_1$ and $t_2$ are in the same component''.

The error probability conditioned on the sketches is equal to $\frac{1}{2}-|\epsilon_{i^*}|$, whose expectation is
\[
	\E\left[\frac{1}{2}-|\epsilon_{i^*}|\right]\leq \frac{2}{\sqrt{n}}.
\]
Since this protocol decides if $s_1$ connects to $t_1$ or $t_2$ for a random graph sampled from $\overline{\cD_{\blk}}$ with error probability at most $2/\sqrt{n}$ and sketch size $8L$, by Lemma~\ref{lem_blk_lb}, we must have $L\geq \Omega(\log^3 n)$.
This proves the theorem.
\end{proof}

\subsection{Sketch size lower bound for one block}
In this subsection, we prove Lemma~\ref{lem_blk_lb}.
We begin by defining a hard distribution $\cD_{\blk}$ that allows us to prove a lower bound on the expected \emph{maximum} sketch size.
Later, we will show how to extend it to expected average sketch size.

\paragraph{Hard distribution for one block $\cD_{\blk}$.}
For simplicity of notations, let us assume the vertices have labels from $-\frac12\sqrt{n}$ to $\frac12\sqrt{n}$.
We begin by describing the graph on positive labeled vertices, from $1$ to $\frac12\sqrt{n}$.
The main part consists of four sets $V^l, V^m, \tV^m, V^r$:
\begin{compactitem}
	\item $V^m$ and $\tV^m$ have $n^{1/4}$ vertices, and a perfect matching is placed between them;
	\item $V^r$ has $2n^{1/8}$ vertices, divided into two parts $V^r_1$ and $V^r_2$ of size $n^{1/8}$;
	\item $V^l$ consists of $n^{1/4}$ groups $V^l_1,\ldots,V^l_{n^{1/4}}$ of sizes \emph{at most} $n^{1/8}$.
\end{compactitem}
Thus, the four sets use in total at most $2n^{3/8}\ll \frac12\sqrt{n}$ vertices.
Each vertex $v^m_j\in V^m$ is associated with group $V^l_j\subset V^l$.
The only possible edges between the four sets are the matching between $V^m$ and $\tV^m$, the edges between $v^m_j$ and the associated $V^l_j$ and the edges between $V^m$ and $V^r$.

To construct such a graph, we first pick random $V^m,\tV^m,V^r_1$ and $V^r_2$ with the corresponding sizes, and place a uniformly random perfect matching between $V^m$ and $\tV^m$.
For each vertex $v^m_j\in V^m$, we independently sample a random instance $(S_j, T_j, P_{j,1}, P_{j,2})$ from the hard distribution $\cD_{\urdec}$ for $\urdec$ for $U=n^{1/8}$ and $\delta=4n^{-1/32}$, where $\cD_{\urdec}$ is the distribution in Lemma~\ref{lem_ur_lb_decision}.
Then we connect $v^m_j$ to $|T_j|$ random \emph{unused} vertices, and they form the set $V^l_j$.
If $S_j\setminus T_j\subseteq P_{j,1}$, we connect $v^m_j$ to $|S_j\setminus T_j|$ random vertices in $V^r_1$, otherwise, we connect it to $|S_j\setminus T_j|$ random vertices in $V^r_2$.
This completes the graph on positive-labeled vertices.

Next, we copy the subgraph to the vertices with negative labels.
That is, if vertices with labels $a,b>0$ have an edge between them, then we add an edge between vertices with labels $-a$ and $-b$.
Then we define the vertex sets ${}^-V^l, {}^-V^m, {}^-\tV^m, {}^-V^r$ over the negative labeled vertices similarly.

Finally, we connect the subgraph to the four special vertices $s_1,s_2,t_1,t_2$.
We connect all vertices in $V^r_1$ and ${}^-V^r_2$ to $t_1$, and all vertices in $V^r_2$ and ${}^-V^r_1$ to $t_2$.
We pick a random vertex $\tv^m_{j^*}\in \tV^m$ and connect it to $s_1$, then we connect ${}^-\tv^m_{j^*}$ to $s_2$.
At last, we connect all unused vertices to $t_1$.
See Figure~\ref{fig_hard_blk}.

It is not hard to verify that the block has two connected components, and $t_1$ and $t_2$ must be in different components.
Moreover, if $s_1$ is in the same component with $t_1$, then the path between them must go through $V^r_1$, in which case, there is a path from $s_2$ to $t_2$ going through ${}^-V^r_1$, i.e., $s_2$ and $t_2$ are in the same component.
Likewise, if $s_1$ is in the same component with $t_2$, then the path must go through $V^r_2$, and hence, $s_2$ and $t_1$  are in the same component.

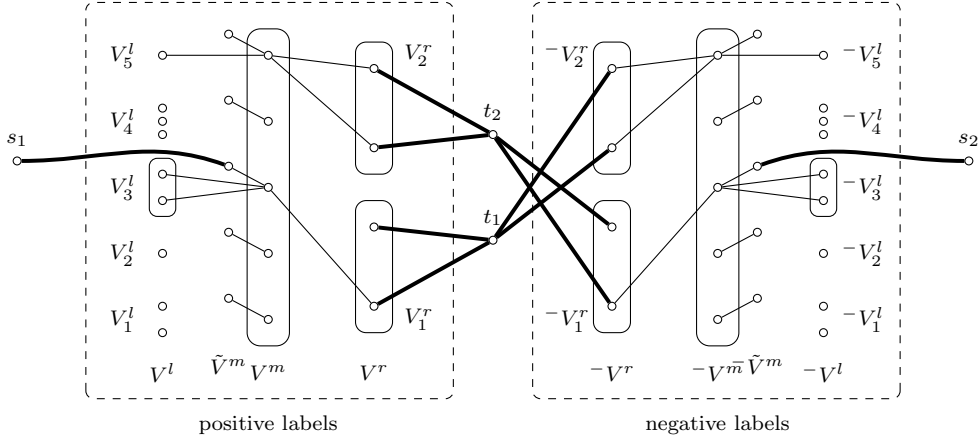
\begin{figure}
	\begin{center}
		\begin{tikzpicture}[vtx/.style={draw, circle, inner sep=0, minimum size=3pt}]
			\foreach \c in {-1,1} {
				\foreach \i in {0,1,2,3,4} {
					\node[vtx] (A\c\i) at (-\c*85+\c*0pt, \i*25 pt) {};
					\node[vtx] (tA\c\i) at (-\c*85-\c*15pt, \i*25+8 pt) {};
					\draw (A\c\i) -- (tA\c\i);
				}
				\foreach \i in {3}
					\foreach \j in {0,1,2}
						\node[vtx] (B\c\i\j) at (-\c*85-\c*40pt, \i*25+\j*5-5 pt) {};
				\foreach \i in {1,4}
					\foreach \j in {1}
						\node[vtx] (B\c\i\j) at (-\c*85-\c*40pt, \i*25+\j*5-5 pt) {};
				\foreach \i in {0,2}
					\foreach \j in {0,2}
						\node[vtx] (B\c\i\j) at (-\c*85-\c*40pt, \i*25+\j*5-5 pt) {};
				\foreach \i in {0,1}
					\foreach \j in {0,1}
						\node[vtx] (C\c\i\j) at (-\c*85+\c*40pt, \i*60+\j*30+5 pt) {};
				\draw [rounded corners=5pt] (-\c*85-\c*8pt, -10pt) rectangle (-\c*85+\c*8pt, 110pt);
				\draw [rounded corners=3pt] (-\c*85-\c*45pt, 39pt) rectangle (-\c*85-\c*35pt, 61pt);
				\draw [rounded corners=4pt] (-\c*85+\c*33pt, -5pt) rectangle (-\c*85+\c*47pt, 45pt);
				\draw [rounded corners=4pt] (-\c*85+\c*33pt, 55pt) rectangle (-\c*85+\c*47pt, 105pt);

				\node at (-\c*85-\c*40pt, -20pt) {\scriptsize \ifthenelse{\c=1}{$V^l$}{${}^-V^l$}};
				\node at (-\c*85+\c*0 pt, -20pt) {\scriptsize \ifthenelse{\c=1}{$V^m$}{${}^-V^m$}};
				\node at (-\c*85-\c*15 pt, -17pt) {\scriptsize \ifthenelse{\c=1}{$\tV^m$}{${}^-\tV^m$}};
				\node at (-\c*85+\c*40pt, -20pt) {\scriptsize \ifthenelse{\c=1}{$V^r$}{${}^-V^r$}};
				\foreach \i in {1,2,3,4,5}
					\node at (-\c*85-\c*55pt, \i*25-25 pt) {\scriptsize \ifthenelse{\c=1}{$V^l_\i$}{${}^-V^l_\i$}};
				\node at (-\c*85+\c*57pt, 0pt) {\scriptsize \ifthenelse{\c=1}{$V^r_1$}{${}^-V^r_1$}};
				\node at (-\c*85+\c*57pt, 100pt) {\scriptsize \ifthenelse{\c=1}{$V^r_2$}{${}^-V^r_2$}};

				\draw (B\c41) -- (A\c4) -- (C\c10);
				\draw (A\c4) -- (C\c11);

				\draw (B\c22) -- (A\c2) -- (C\c00);
				\draw (B\c20) -- (A\c2);
			}
			\draw [dashed, rounded corners=3pt] (-154pt, -30pt) rectangle (-15pt, 120pt);
			\draw [dashed, rounded corners=3pt] (154pt, -30pt) rectangle (15pt, 120pt);
			\node at (-85pt, -40pt) {\scriptsize positive labels};
			\node at (85pt, -40pt) {\scriptsize negative labels};

			\node [vtx, label={\scriptsize $t_2$}] (t2) at (0pt, 70pt) {};
			\node [vtx, label={\scriptsize $t_1$}] (t1) at (0pt, 30pt) {};
			\draw [line width=1.5pt] (C-110) -- (t1);
			\draw [line width=1.5pt] (C-111) -- (t1);
			\draw [line width=1.5pt] (C100) -- (t1);
			\draw [line width=1.5pt] (C101) -- (t1);
			\draw [line width=1.5pt] (C-100) -- (t2);
			\draw [line width=1.5pt] (C-101) -- (t2);
			\draw [line width=1.5pt] (C110) -- (t2);
			\draw [line width=1.5pt] (C111) -- (t2);

			\node [vtx, label={\scriptsize $s_2$}] (s2) at (180pt, 60pt) {};
			\node [vtx, label={\scriptsize $s_1$}] (s1) at (-180pt, 60pt) {};
			\draw (s2) edge[line width=1.5pt, out=180, in=20] (tA-12);
			\draw (s1) edge[line width=1.5pt, out=0, in=160] (tA12);
		\end{tikzpicture}
	\end{center}
	\caption{A hard instance for one block.}\label{fig_hard_blk}
\end{figure}

\bigskip

To decide whether $s_1$ is in the same component with $t_1$ or $t_2$, we need to solve the $\urdec$ instance embedded at the vertex $v^m_{j^*}$, which shares a common neighbor ($\tv^m_{j^*}$) with $s_1$.
This is because the neighbors of $v^m_{j^*}$ that are not in $V^l$ are all contained in either $V^r_1$ or $V^r_2$, and we need to decide which case it is (see below for more details).
Recall we have proved in the previous section that with error probability $\delta$, $\urdec$ requires message length at least $\Omega(\log(1/\delta)\log^2 U)$, which is $\Omega(\log^3 n)$ for our setting of parameters.
We restate the lower bound below.
\begin{restate}[Lemma~\ref{lem_ur_lb_decision}]
	\lemurlbdecisioncont
\end{restate}

To prove Lemma~\ref{lem_blk_lb}, we apply an embedding argument similar to \cite{NY19} to make a reduction from $\urdec$, and then apply the $\urdec$ lower bound.
Given an $\urdec$ instance $(S, T, P_1, P_2)$ for $U=n^{1/8}$, consider the following procedure to construct a graph $G_{\blk}$ for a block on vertices labeled from $-\frac12\sqrt{n}$ to $\frac12\sqrt{n}$ (note that this procedure as is may not be completed by either player without communication):
\begin{compactenum}
	\item pick random $V^m,\tV^m$ of size $n^{1/4}$ from the vertices with positive labels, and place a uniformly random perfect matching between them;
	\item pick a random vertex $v^m_{j^*}\in V^m$, let $\tv^m_{j^*}\in\tV^m$ be the vertex it matches to;
	\item pick a random \emph{injection} $f:[U]\rightarrow \{1,\ldots,\frac12\sqrt{n}\}\setminus (V^m\cup \tV^m)$;
	\item set $V^l_{j^*}$ to $f(T)$;
	\item set $V^r_1$ to the union of $f(P_1)$ and $n^{1/8}-|P_1|$ other random unused vertices;
	\item set $V^r_2$ to the union of $f(P_2)$ and $n^{1/8}-|P_2|$ other random unused vertices;
	\item connect $v^m_{j^*}$ to $f(S)$;
	\item sample the neighborhoods of $V^m\setminus \{v^m_{j^*}\}$ according to $\cD_{\blk}$;
	\item copy the graph to negative labeled vertices according to $\cD_{\blk}$;
	\item connect $s_1$ to $\tv^m_{j^*}$, $s_2$ to ${}^-\tv^m_{j^*}$, $t_1$ to all vertices in $V^r_1$ and ${}^-V^r_2$, $t_2$ to all vertices in $V^r_2$ and ${}^-V^r_1$.
\end{compactenum}
Note that $v^m_{j^*}$ connects to all $|T|$ vertices in $V^l_{j^*}$, it connects to $|S\setminus T|$ vertices in $V^r$, which are all in either $V^r_1$ or $V^r_2$.
When $(S, T, P_1, P_2)$ is sampled from $\cD_{\urdec}$, the neighborhood of $v^m_{j^*}$ follows $\cD_{\blk}$.
Since the rest of the graph is also sampled according to $\cD_{\blk}$, the whole graph follows the hard distribution $\cD_{\blk}$.
Moreover, $S\setminus T\subset P_1$ if $s_1$ and $t_1$ are in the same connected component, and $S\setminus T\subset P_2$ if $s_1$ and $t_2$ are in the same component.

Denote by $\mu$ the joint distribution of $S,T,P_1,P_2$ and $G_{\blk}$ following the above procedure.
We use ${}^\pm V^m$ to denote $V^m\cup {}^-V^m$, and ${}^\pm V^l,{}^\pm \tV^m,{}^\pm V^r$ are defined similarly.
For a vertex $v$, we denote the sketch of its neighborhood by $\sk(v)$.
Similarly for a set of vertices $V$, $\sk(V)$ denotes the collection of all its sketches.

Suppose there is a protocol $\cP_{\blk}$ that decides if $s_1$ is the same component with $t_1$ or $t_2$ with error probability $2/\sqrt{n}$ such that
\begin{compactitem}
	\item the expected \emph{average} sketch size of ${}^\pm V^m$ is at most $L$, and
	\item the expected \emph{average} sketch size of ${}^\pm V^r$ is at most $L$.
\end{compactitem}
Note that both conditions are implied if the expected \emph{maximum} sketch size is at most $L$.
We are going to use this protocol to solve the communication problem $\urdec$ using $O(L)$ bits of communication in expectation and with low error probability.

\paragraph{Protocol for $\urdec$.}
The players first sample $V^m,\tV^m,v^m_{j^*},f$ and the perfect matching $\Pi$ using public random bits according to $\mu$ (step 1 to step 3).
Then Alice, who knows $S$, privately computes $f(S)$ (step 7), which together with $\tv^m_{j^*}$ is the neighborhood of $v^m_{j^*}$, then she simulates $\cP_{\blk}$ as $v^m_{j^*}$ and its copy ${}^-v^m_{j^*}$, and sends the sketches $\sk(v^m_{j^*})$ and $\sk({}^-v^m_{j^*})$ to Bob.
Bob, who knows $T,P_1,P_2$, computes $f(T),f(P_1),f(P_2)$, and samples $V^r_1,V^r_2$ and $V^l_{j^*}$ according to $\mu$ (step 4 to step 6).
Then he samples the neighborhood for all vertices in $V^m\setminus \{v^m_{j^*}\}$ according to $\mu$ (step 8).
Now, Bob knows the sets $V^l_1,\ldots,V^l_{n^{1/4}},V^m,\tV^m,V^r_1,V^r_2$, and he knows the neighborhoods of $s_1,s_2,t_1,t_2$ and the neighborhoods of all vertices in $V^l$, $V^m\setminus \{v^m_{j^*}\}$, $\tV^m$.
Bob computes the sketches for all these vertices and the sketches for their copies with negative labels.
Together with Alice's message, Bob knows $\sk({}^\pm V^l),\sk({}^\pm V^m),\sk({}^\pm \tV^m),\sk(s_1),\sk(s_2),\sk(t_1),\sk(t_2)$.
Bob examines the posterior distribution of the neighborhood of $v^m_{j^*}$ \emph{conditioned on} 
\begin{compactitem}
	\item the sets ${}^\pm V^l$, ${}^\pm V^m$, ${}^\pm \tV^m$, ${}^\pm V^r$, 
	\item the matching $\Pi$ between $V^m$ and $\tV^m$,
	\item the index $j^*$, and
	\item the sketches $\sk({}^\pm V^l),\sk({}^\pm V^m),\sk({}^\pm \tV^m),\sk(s_1),\sk(s_2),\sk(t_1),\sk(t_2)$.
\end{compactitem}
If in this posterior distribution, $v^m_{j^*}$ connects to $V^r_1$ with probability at least $1/2$, Bob returns ``$S\setminus T\subseteq P_1$'', otherwise, he returns ``$S\setminus T\subseteq P_2$''.

\paragraph{Communication cost.} The only message in the above protocol is the two sketches $\sk(v^m_{j^*})$ and $\sk({}^-v^m_{j^*})$. 
Since $v^m_{j^*}$ is a random vertex in $V^m$, the expected length $|\sk(v^m_{j^*})|$ is simply the expected \emph{average} sketch size of vertices in $V^m$.
Similarly, the expected length $|\sk({}^-v^m_{j^*})|$ is the expected average sketch size of ${}^-V^m$.
By the assumption of $\cP_{\blk}$, the expected message length is at most $2L$.

\paragraph{Error probability.} It remains to analyze the error probability of the protocol.
If at the end of the protocol, Bob also knew $\sk({}^\pm V^r)$, then by simulating $\cP_{\blk}$ as the referee, Bob would be able to detect if $s_1$ is in the same component with $t_1$ or $t_2$, with an overall error probability of at most $2/\sqrt n$ on a random instance.
In particular, he would be able to decide if $v^m_{j^*}$ has its neighbors in $V^r_1$ or $V^r_2$, i.e., $S\setminus T\subset P_1$ or $S\setminus T\subset P_2$.
In the other words, in the posterior distribution of the neighborhood of $v^m_{j^*}$ as in the protocol but further conditioned on $\sk({}^\pm V^r)$, let $\epsilon$ be such that $v^m_{j^*}$ has no neighbors in $V^r_1$ with probability $1-\epsilon$, then we must have $\E[\min\{\epsilon,1-\epsilon\}]$ upper bounded by the overall error probability $2/\sqrt n$.
To upper bound the error probability of the protocol, we are going to show that whether we condition on $\sk({}^\pm V^r)$ does not distort the posterior distribution by much in expectation.

The expected total size of $\sk({}^\pm V^r)$ is at most $2Ln^{1/8}$ by the assumption of $\cP_{\blk}$.
Denote by $N(v)$ the neighborhood of $v$.
We have the mutual information
\[
	I(\sk({}^\pm V^r);N(v^m_1),\ldots,N(v^m_{n^{1/4}})\mid \Pi, \sk({}^\pm V^l),\sk({}^\pm V^m),\sk({}^\pm\tV^m),\sk(s_1),\sk(s_2),\sk(t_1),\sk(t_2))\leq 2Ln^{1/8},
\]
where $\Pi$ is the matching between $V^m,\tV^m$, and for simplicity of notations, we omitted the sets $V^l,V^m,\tV^m,V^r$ in the condition.
Then observe that conditioned on $\Pi, \sk({}^\pm V^l),\sk({}^\pm V^m)$, we have $\sk({}^\pm V^r)$ and $N(v^m_1),\ldots,N(v^m_{n^{1/4}})$ are independent of $\sk({}^\pm\tV^m)$, $\sk(s_1)$, $\sk(s_2)$, $\sk(t_1)$, $\sk(t_2)$.
To see this,
\begin{compactitem}
 	\item the neighborhoods of $t_1$ and $t_2$ are deterministic given the sets $V^r_1, V^r_2$;
 	\item each vertex in $\tV^m$ has a fixed neighbor in $V^m$ given the matching;
 	\item one vertex in $\tV^m$ [resp. ${}^-\tV^m$] has $s_1$ [resp. $s_2$] as its neighbor, which is determined independent of the rest of the graph.
\end{compactitem}
Hence, we may remove them from the condition,
\[
	I(\sk({}^\pm V^r);N(v^m_1),\ldots,N(v^m_{n^{1/4}})\mid \Pi, \sk({}^\pm V^l),\sk({}^\pm V^m))\leq 2Ln^{1/8}.
\]
Next, observe that $N(v^m_1),\ldots,N(v^m_{n^{1/4}})$ are still independent even conditioned on $\Pi, \sk({}^\pm V^l),\sk({}^\pm V^m)$.
By the superadditivity of mutual information with independent random variables, we have
\[
	\sum_{j=1}^{n^{1/4}} I(\sk({}^\pm V^r);N(v^m_j)\mid \Pi, \sk({}^\pm V^l),\sk({}^\pm V^m))\leq 2Ln^{1/8}.
\]
Since conditioned on $\Pi, \sk({}^\pm V^l),\sk({}^\pm V^m),\sk({}^\pm V^r),N(v^m_1),\ldots,N(v^m_{n^{1/4}})$, $j^*$ is still uniformly random, we have
\begin{align*}
	&\phantom{=}\,\,I(\sk({}^\pm V^r);N(v^m_{j^*})\mid j^*, \Pi, \sk({}^\pm V^l),\sk({}^\pm V^m)) \\
	&=\sum_{j=1}^{n^{1/4}}\frac{1}{n^{1/4}}\cdot I(\sk({}^\pm V^r);N(v^m_{j^*})\mid j^*=j,\Pi, \sk({}^\pm V^l),\sk({}^\pm V^m)) \\
	&=\frac{1}{n^{1/4}}\cdot\sum_{j=1}^{n^{1/4}} I(\sk({}^\pm V^r);N(v^m_{j})\mid \Pi, \sk({}^\pm V^l),\sk({}^\pm V^m)) \\
	&\leq 2Ln^{-1/8}.
\end{align*}

Let $\dist_{\mu}(X\mid Y)$ denote the distribution of $X$ conditioned on $Y$.
By Pinsker's inequality, concavity of square root and the fact that mutual information is equal to the expected KL-divergence, we have
\begin{align*}
	&\phantom{=}\,\,\E[\|\dist_{\mu}(N(v^m_{j^*})\mid j^*, \Pi, \sk({}^\pm V^l),\sk({}^\pm V^m))-\dist_{\mu}(N(v^m_{j^*})\mid j^*, \Pi, \sk({}^\pm V^l),\sk({}^\pm V^m),\sk({}^\pm V^r))\|_1] \\
	&\leq \E\left[\sqrt{2\DKLver{N(v^m_{j^*})\mid j^*, \Pi, \sk({}^\pm V^l),\sk({}^\pm V^m)}{N(v^m_{j^*})\mid j^*, \Pi, \sk({}^\pm V^l),\sk({}^\pm V^m),\sk({}^\pm V^r)}}\right] \\
	&\leq \sqrt{\E\left[{2\DKLver{N(v^m_{j^*})\mid j^*, \Pi, \sk({}^\pm V^l),\sk({}^\pm V^m)}{N(v^m_{j^*})\mid j^*, \Pi, \sk({}^\pm V^l),\sk({}^\pm V^m),\sk({}^\pm V^r)}}\right]} \\
	&=\sqrt{2I(\sk({}^\pm V^r);N(v^m_{j^*})\mid j^*, \Pi, \sk({}^\pm V^l),\sk({}^\pm V^m))} \\
	&\leq \sqrt{4Ln^{-1/8}}.
\end{align*}

Again by the fact that $N(v^m_{j^*})$ is independent of $\sk({}^\pm \tV^m)$ and $\sk(s_1),\sk(s_2),\sk(t_1),\sk(t_2)$, conditioned on $j^*, \Pi, \sk({}^\pm V^l),\sk({}^\pm V^m)$, or conditioned on $j^*, \Pi, \sk({}^\pm V^l),\sk({}^\pm V^m),\sk({}^\pm V^r)$, the distribution 
\[
	\dist_{\mu}(N(v^m_{j^*})\mid j^*, \Pi, \sk(\pm V^l),\sk(\pm V^m),\sk({}^\pm \tV^m),\sk(s_1),\sk(s_2),\sk(t_1),\sk(t_2))
\]
is $\sqrt{4Ln^{-1/8}}$-close to 
\[
	\dist_{\mu}(N(v^m_{j^*})\mid j^*, \Pi, \sk(\pm V^l),\sk(\pm V^m),\sk(\pm \tV^m),\sk(\pm V^r),\sk(s_1),\sk(s_2),\sk(t_1),\sk(t_2))
\]
in expectation.
Note that the former distribution is exactly what Bob examines.
However, we know that in the latter distribution, $N(v^m_{j^*})$ is disjoint from $V^r_1$ with probability $1-\epsilon$ such that $\E[\min\{\epsilon,1-\epsilon\}]\leq 2/\sqrt n$.
Hence, in the former distribution, we also have $\E[\min\{\epsilon,1-\epsilon\}]\leq 2/\sqrt n+\sqrt{4Ln^{-1/8}}$, which is at most $4n^{-1/32}$ when $L\leq n^{1/16}$.
By answering $S\setminus T\subset P_1$ if $\epsilon>1/2$ and $S\setminus T\subset P_2$ if $\epsilon\leq 1/2$, the overall error probability of the protocol is at most $\delta=4n^{-1/32}$.
Finally, by Lemma~\ref{lem_ur_lb_decision}, we must have $L\geq \min\{n^{1/16},\Omega(\log^3 n)\}=\Omega(\log^3 n)$.

\subsection{Extending to average sketch size} The above argument shows that if the error probability of the sketching scheme for a block is at most $2/\sqrt n$, and the expected average sketch size of ${}^\pm V^m$ and that of ${}^\pm V^r$ are both at most $L$, then $L$ must be at least $\Omega(\log^3 n)$.
However, since $|V^m|+|V^r|\ll \sqrt n$, it does not directly prove a lower bound on the expected average sketch size of \emph{all} vertices.
In the following, we show how to prove the same lower bound on $L$ when the expected average sketch size of all vertices is at most $L$.
The main idea is simple: with constant probability, we construct a graph such that most vertices have neighborhoods that look like those of $V^m$; with constant probability, most vertices have neighborhoods that look like those of $V^r$.
Therefore, if the overall average sketch size is $L$, then it implies that the expected average sketch sizes of $V^m$ and $V^r$ are both at most $O(L)$.
We also need to ensure that the block always consists of two connected components such that $s_1,s_2$ are in different components and $t_1,t_2$ are in different components.
We begin by describing the hard distribution $\overline{\cD_{\blk}}$.

\paragraph{Hard distribution $\overline{\cD_{\blk}}$.} 
Let $\cD_{\textrm{deg}, m}$ be the degree distribution of a vertex in $V^m$ according to $\cD_{\blk}$.
Then for every $v\in V^m$, the marginal distribution of its neighborhood is $d$ uniformly random vertices, for $d$ following $\cD_{\textrm{deg}, m}$.
Similarly, let $\cD_{\textrm{deg}, r}$ be the degree distribution of a vertex in $V^r$.
Then for $v\in V^r_1$ [resp. $v\in V^r_2$], the marginal distribution of its neighborhood is $d$ uniformly random vertices, for $d$ following $\cD_{\textrm{deg}, r}$, \emph{conditioned on} $t_1$ [resp. $t_2$] being its neighbor.
In the distribution $\overline{\cD_{\blk}}$, we randomly choose one of the following three procedures to generate the block.
\begin{enumerate}[(i)]
	\item We sample the block from the previous distribution $\cD_{\blk}$.
	\item We choose between the following two cases randomly: connect $s_1$ to $t_1$ and $s_2$ to $t_2$; connect $s_1$ to $t_2$ and $s_2$ to $t_1$.
	We pick half of the vertices $S$ with positive labels, and let $\overline{S}$ be the remaining half.
	For each vertex $v\in S$, we sample its degree $d_v$ according to $\cD_{\textrm{deg}, m}$, and sample $d_v$ vertices in $\overline{S}$ to be its neighbors.
	Then we connect all $\overline{S}$ to $t_1$.
	Finally, we copy the graph (as well as the incident edges to $t_1$) to the negative-labeled vertices.
	\item We choose between the following two cases randomly: connect $s_1$ to $t_1$ and $s_2$ to $t_2$; connect $s_1$ to $t_2$ and $s_2$ to $t_1$.
	We partition the remaining positive labeled vertices into four sets of equal sizes $S_1,\overline{S}_1,S_2,\overline{S}_2$.
	For each vertex $v\in S_1$ [resp. $v\in S_2$], we sample its degree $d_v$ according to $\cD_{\textrm{deg}, r}$, sample $d_v-1$ vertices in $\overline{S}_1$ [resp. $v\in \overline{S}_2$] to be its neighbors and connect $v$ to $t_1$ [resp. $t_2$].
	Then we connect all $\overline{S}_1$ to $t_1$ and $\overline{S}_2$ to $t_2$.
	Finally we copy the graph (as well as the incident edges to $t_1$ and $t_2$) to the negative-labeled vertices.
\end{enumerate}
It is easy to verify that the block always has two connected components such that $s_1,s_2$ are in different components and $t_1,t_2$ are in different components.

If there is a protocol that solves an instance sampled from $\overline{\cD_{\blk}}$ with error probability $2/\sqrt n$ and expected average sketch size $L$.
Then its error probability conditioned on choosing procedure (i) is at most $6/\sqrt{n}$, i.e., the error probability for $\cD_{\blk}$ is at most $6/\sqrt{n}$.
Moreover, its expected average sketch size conditioned on choosing procedure (ii) is at most $3L$.
Since a constant fraction of the vertices in this case have their neighborhoods identically distributed as vertices in ${}^\pm V^m$ according to $\cD_{\blk}$.
It implies that the expected average sketch size of ${}^\pm V^m$ on a instance sampled from $\cD_{\blk}$ is at most $O(L)$.
Similarly, from procedure (iii), we obtain that the expected average sketch size of ${}^\pm V^r$ on a instance sampled from $\cD_{\blk}$ is also at most $O(L)$.
Finally, by the argument from the previous subsection, we conclude that $L\geq \Omega(\log^3n)$.
This proves Lemma~\ref{lem_blk_lb}.
\bibliography{refs}
\bibliographystyle{alpha}

\end{document}